\newtheorem{theorem}{Theorem}
\newtheorem{lemma}[theorem]{Lemma}
\newtheorem{proposition}[theorem]{Proposition}
\theoremstyle{remark}
\newtheorem{remark}[theorem]{Remark}
\newtheorem*{switching lemma}{Switching Lemma}
\newtheorem*{open problem}{Open Problem}
\theoremstyle{definition}
\newtheorem{definition}[theorem]{Definition}
\newtheorem*{definition*}{Definition}
\DeclareMathOperator{\dom}{dom}
\DeclareMathOperator{\im}{im}
\title{Resolution Lower Bounds for Refutation Statements} %
\author{Michal Garl\'{\i}k\thanks{Funded by European Research Council (ERC) under the European Union's Horizon 2020 research and innovation programme, grant agreement ERC-2014-CoG 648276 (AUTAR).}
}
\affil{Dept. Ci\`encies de la Computaci\'o\\
Universitat Polit\`ecnica de Catalunya\\
C. Jordi Girona, 1-3\\
08034 Barcelona, Spain\\
email: \texttt{mgarlik@cs.upc.edu}
}
\begin{document}

\maketitle

\begin{abstract}
For any unsatisfiable CNF formula we give an exponential lower bound on the size of resolution refutations of a propositional statement that the formula has a resolution refutation. We describe three applications. (1) An open question in \cite{atserias-muller2019-ref} asks whether a certain natural propositional encoding of the above statement is hard for Resolution. We answer by giving an exponential size lower bound. (2) We show exponential resolution size lower bounds for reflection principles, thereby improving a result in \cite{Atserias-Bonet2004}. (3) We provide new examples of CNFs that exponentially separate Res(2) from Resolution (an exponential separation of these two proof systems was originally proved in \cite{segerlind-buss-impagliazzo}).
 \end{abstract}

\section{Introduction}
Proving lower bounds on the size of propositional proofs is the central task of proof complexity theory. After Cook and Reckhow \cite{cook-reckhow1979} motivated this line of research as an approach towards establishing NP $\neq$ coNP, some initial success for weak proof systems followed, e.g., the first exponential size lower bound for Resolution was proved by Haken \cite{Haken1985}. Nevertheless, many important open problems from the 1980s and 1990s remain unsolved, and it seems that proving nontrivial lower bounds on the size of propositional proofs is hard. If it is hard for people, it is natural to ask if it is also hard for the proof systems themselves. In trying to formalize this question so that it makes sense to a proof system, we must say what we mean by `proving is hard'. It can be `there are no short proofs', a statement which occurs as a part of reflection principles. By `short' we mean polynomial in the size of the formula being proven or refuted. The negation of the \emph{reflection principle} for a proof system $P$ is a conjunction of the statement `$y$ is a $P$-refutation of length $s$ of formula $x$ of length $n$' and the statement `$z$ is a satisfying assignment of formula $x$'. In a propositional formulation of the principle, $P,s,n$ are fixed parameters and $x,y,z$ are disjoint sets of variables. 
A possible way to formalize the above question is then to take the first conjunct of the negation of the reflection principle and plug in  for the $x$-variables some formula $F$ of length $n$. The resulting formula was discussed and utilized by Pudl\'ak \cite{Pudlak2003}; we denote it by $\textnormal{REF}^F_{P,s}$ and call it a \emph{refutation statement} for $P$. We may now ask whether some proof system $Q$ can shortly refute $\textnormal{REF}^F_{P,s}$, and if it can not, we can interpret this to mean that lower bounds for $P$-refutations of $F$ are hard for $Q$. 

Pudl\'ak \cite{Pudlak2003} found connections between the reflection principles and automatizability, and these were elaborated on in \cite{Atserias-Bonet2004}. Following \cite{Bonet-Pitassi-Raz}, a proof system $P$ is \emph{automatizable} if there is a deterministic algorithm that when given as input an unsatisfiable CNF formula $F$ outputs its $P$-refutation in time polynomial in the size of the shortest $P$-refutation of $F$. Recently, Atserias and M\"uller \cite{atserias-muller2019-ref} showed that Resolution is not automatizable unless $\textnormal{P} = \textnormal{NP}$. Refutation statements for Resolution play a prominent role in their proof. They show that strong enough resolution size lower bounds for $\textnormal{REF}^F_{\textnormal{Res},s}$ with an unsatisfiable $F$ imply their result. However, they leave the lower bound problem for $\textnormal{REF}^F_{\textnormal{Res},s}$ as an open question, and in place of $\textnormal{REF}^F_{\textnormal{Res},s}$ they use in the proof a different formulation of the refutation statement, obtained by a \emph{relativization} of $\textnormal{REF}^F_{\textnormal{Res},s}$, for which lower bounds are easier to get. In this paper we focus mainly on giving an answer to the question.

\subsection{Results in This Paper}
The result that requires the most work is the following lower bound.
\begin{theorem} \label{thm:main-lower-bound-REF-F-s-t-Introduction}
For each $\epsilon>0$ there is $\delta>0$ and an integer $t_0$ such that if $n,r,s,t$ are integers satisfying $t\geq s \geq n+1$, $r \geq n \geq 2$, $t \geq r^{3+\epsilon}$, $t \geq t_0$, and $F$ is an unsatisfiable CNF consisting of $r$ clauses $C_1, \ldots , C_r$ in $n$ variables $x_1 , \ldots , x_n$, then any resolution refutation of $\textnormal{REF}^{F}_{s,t}$ has length greater than $2^{t^{\delta}}$.
\end{theorem}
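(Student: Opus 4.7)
The plan is to combine the Ben-Sasson--Wigderson size-width tradeoff with a width lower bound obtained by an adversary argument. Under the hypotheses the number of variables of $\textnormal{REF}^{F}_{s,t}$ is polynomially bounded in $t$ (since $n \leq r \leq t^{1/(3+\epsilon)}$ and $s \leq t$, and each variable of the encoding is indexed by a clause position together with a literal or a pointer), so by the tradeoff any resolution refutation of length at most $2^{t^\delta}$ must contain a clause of width $O(t^{c/2} \cdot t^{\delta/2})$ for a fixed constant $c$ depending on the encoding. It therefore suffices to prove that any resolution refutation of $\textnormal{REF}^{F}_{s,t}$ requires width $\Omega(t^{1/2+\delta'})$ for some $\delta'>0$, and then tune $\delta$ against $\delta'$ and $c$.

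To prove the width lower bound I would set up a prover-delayer adversary argument. The adversary maintains, after each query, a partial assignment $\rho$ to the variables of $\textnormal{REF}^{F}_{s,t}$ together with an auxiliary witness structure (a hypothetical partial resolution refutation of $F$ plus bookkeeping data) consistent with $\rho$ and with all axioms. At each variable query the adversary picks the answer that leaves the maximum combinatorial freedom in this hypothetical refutation, relying on a random-restriction-type counting argument to certify that such a locally consistent choice always exists. Any narrow clause encountered along a purported refutation can then be falsified by extending $\rho$ to a total assignment satisfying all axioms, violating soundness and forcing the clause in question to be wide.

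The principal obstacle will be making the adversary's bookkeeping respect the intricate axioms of $\textnormal{REF}^{F}_{s,t}$: the clause-formation axioms, the axioms asserting that each non-initial clause is a correct resolvent of two earlier ones, and the axiom that the last clause is empty. I expect the restriction step to ``freeze'' most positions of the encoded refutation while leaving a sub-block of size roughly $t^{\delta}$ alive, so that the induced formula on the live block behaves, up to polynomial overhead, like a relativized refutation statement as in \cite{atserias-muller2019-ref} or, more directly, like a hard combinatorial principle (a functional pigeonhole principle or a Tseitin formula on an expander) whose resolution width is classically known to be large. The cubic slack $t \geq r^{3+\epsilon}$ between $r$ and $t$ is what supplies the room needed for the restriction probabilities to succeed with positive probability; the most delicate part of the argument will be tracking the width threshold faithfully through the restriction and the reduction, and matching the exponents so that both $\delta'>0$ and the implicit constants in the width lower bound come out in the required form.
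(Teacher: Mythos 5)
Your first step---running the Ben-Sasson--Wigderson size-width tradeoff---is blocked by the initial width of $\textnormal{REF}^{F}_{s,t}$. The clauses \eqref{refclause:L-dom} and \eqref{refclause:R-dom} have width exactly $t$, so the tradeoff gives a size lower bound of the form $\exp\bigl(\Omega\bigl((W-w_0)^2/N\bigr)\bigr)$ only when the proved width bound $W$ exceeds the initial width $w_0 = \Theta(t)$ by a margin of roughly $\sqrt{N}$, where $N = \Theta(st^2)$ is the number of variables. Your proposed width bound $\Omega(t^{1/2+\delta'})$ is far below $w_0$, so the tradeoff yields nothing; and even a width bound of, say, $\Theta(t^{3/2})$ would not suffice because $\sqrt{N}$ can be as large as $t^{3/2}$. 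This is not a matter of tuning exponents: the wide ``domain'' axioms for the $L$- and $R$-variables make the size-width route unusable for this unary encoding, which is precisely why the paper does not take it (see its Remark at the end of the lower-bound section). Note also that the wide axioms survive the random restriction, so restricting first and then invoking the tradeoff does not help.

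The paper instead argues size lower bounds directly. It first applies a random restriction to the refutation $\Pi$ (not to reduce to a known hard principle, but to kill wide clauses in $\Pi$ in a precise quantitative sense), and then runs an adversary argument on $\Pi \restriction \rho$ by maintaining an \emph{admissible} partial assignment that is required to falsify only those literals it actually evaluates and to set only the variable blocks whose home pairs are \emph{important} in the current clause, where importance is a carefully graded notion (a pair is $L$-important, for instance, only if the clause contains a negative $L$-literal of that pair or at least $t/2$ positive ones). This lets the adversary propagate through a clause that is wide in the raw sense yet ``narrow'' in the number of important pairs, which is exactly what circumvents the $\Theta(t)$-wide axioms. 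Your sketch, by contrast, expects the restricted formula to collapse to a relativized refutation statement or to a classical hard principle; that reduction does not happen (and would in any case be backwards, since the open question the theorem resolves is exactly the \emph{non}-relativized case that resists such reductions). Finally, the phrase ``extending $\rho$ to a total assignment satisfying all axioms'' cannot be realized, because $\textnormal{REF}^{F}_{s,t}$ is unsatisfiable; the working argument instead maintains a consistent \emph{partial} assignment all the way to an axiom and derives the contradiction there.
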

Here $\textnormal{REF}^{F}_{s,t}$, missing the lower index denoting the proof system as we concentrate on resolution refutation statements, is a variant of the refutation statement insisting that the resolution refutation it describes has the form of a levelled graph. 
A similar simplifying assumption, making it more practical to design random restrictions, is used in \cite{Thapen2016} for a propositional version of the coloured polynomial local search principle. Our proof proceeds with defining a random restriction tailored to $\textnormal{REF}^{F}_{s,t}$ and to an adversary argument. The nature of the refutation statement and the fact that the relations between refutation lines are encoded in unary, rather than in binary, necessitate a more complicated adversary argument than in  \cite{Pudlak2000} or \cite{Thapen2016}, and this in turn poses more requirements on the random restriction. We discuss these details after the proof, in Remarks \ref{remark:bin-encoding} and \ref{remark:obstacles}. 

We then show that Theorem \ref{thm:main-lower-bound-REF-F-s-t-Introduction} implies an exponential resolution size lower bound for the encoding of the refutation statement for which the lower bound question in \cite{atserias-muller2019-ref} is originally asked.

We see two reasons for working with the unary encoding of $\textnormal{REF}^{F}_{s,t}$. First, $\textnormal{REF}^{F}_{s,t}$ is weaker than refutation statements encoded in binary or relativized refutation statements. Hence lower bounds for $\textnormal{REF}^{F}_{s,t}$ imply lower bounds for the other encodings. Second, researchers who dealt with propositional encodings of reflection principles or refutation statements opted for the unary encoding \cite{ Atserias-Bonet2004, Krajicek-Skelley-Thapen, Pudlak2003}.

Our next result is that the negation of the reflection principle for Resolution, expressed by the formula $\textnormal{SAT}^{n,r} \land \textnormal{REF}^{n,r}_{s,t}$, exponentially separates the system Res(2) from Resolution. It was shown by Atserias and Bonet \cite{Atserias-Bonet2004} that a similar encoding of the negation of the reflection principle separates the two theories almost-exponentially (giving a $2^{\Omega(2^{\log^{\epsilon} n})}$ resolution lower bound and a polynomial $\textnormal{Res}(2)$ upper bound). The exponential separation of $\textnormal{Res}(2)$ from Resolution was originally proved in \cite{segerlind-buss-impagliazzo} using a variation of the graph ordering principle. 
Our lower bound is stated in Theorem \ref{thm:refl-princ-Lower-bound-Introduction} below. 
\begin{theorem}
\label{thm:refl-princ-Lower-bound-Introduction}
For every $c>4$ there is $\delta >0$ and an integer $n_0$ such that if $n,r,s,t$ are integers satisfying $t\geq s \geq n+1$, $r \geq n \geq n_0$, $n^c \geq t \geq r^4$, then any resolution refutation of $\textnormal{SAT}^{n,r} \land \textnormal{REF}^{n,r}_{s,t}$ has length greater than $2^{n^{\delta}}$.
\end{theorem}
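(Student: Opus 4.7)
The strategy is to reduce Theorem \ref{thm:refl-princ-Lower-bound-Introduction} to Theorem \ref{thm:main-lower-bound-REF-F-s-t-Introduction} by restricting the $x$-variables of $\textnormal{SAT}^{n,r} \land \textnormal{REF}^{n,r}_{s,t}$ to spell out a hard unsatisfiable CNF. Let $F$ be any unsatisfiable CNF with exactly $n$ variables and $r$ clauses whose resolution-refutation size is at least $2^{\Omega(n)}$; a Tseitin tautology on a bounded-degree expander graph (padded to $r$ clauses by duplication if necessary) is a convenient choice. Let $\rho$ be the restriction of the $x$-variables that spells out exactly $F$. Under $\rho$ the conjunct $\textnormal{REF}^{n,r}_{s,t}$ becomes $\textnormal{REF}^{F}_{s,t}$ from Theorem \ref{thm:main-lower-bound-REF-F-s-t-Introduction}, while $\textnormal{SAT}^{n,r}$ collapses (modulo clauses that become $1$) to the CNF $F(z)$ asserting that the assignment variables $z$ satisfy $F$.

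The resulting restricted formula $F(z) \land \textnormal{REF}^F_{s,t}$ is a conjunction of two CNFs over disjoint variable sets, so I may invoke the following standard observation: any resolution refutation of $A \land B$ with $V_A \cap V_B = \emptyset$ has length at least $\min\bigl(\mathrm{res}(A), \mathrm{res}(B)\bigr)$. Indeed, two clauses on disjoint variable sets cannot be resolved, so every derived clause in such a refutation is pure (involves variables from only one side); the final resolution producing $\Box$ uses a variable from exactly one side, and the pure ancestors of $\Box$ on that side form a sub-refutation of the corresponding CNF, contained in the original proof. The first term $\mathrm{res}(F(z))$ is $2^{\Omega(n)}$ by the choice of $F$. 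For the second, fix $\epsilon$ with $0 < \epsilon < c-4$; the hypotheses $t \geq s \geq n+1$, $r \geq n$, $t \geq r^4 \geq r^{3+\epsilon}$ and $t \geq n^4 \geq t_0$ (once $n \geq n_0$) let us invoke Theorem \ref{thm:main-lower-bound-REF-F-s-t-Introduction}, producing a $\delta' > 0$ with $\mathrm{res}(\textnormal{REF}^F_{s,t}) > 2^{t^{\delta'}} \geq 2^{n^{4\delta'}}$. Taking $\delta$ below both exponents yields $\mathrm{res}(F(z) \land \textnormal{REF}^F_{s,t}) > 2^{n^{\delta}}$, and since applying a restriction cannot increase resolution refutation size, the same lower bound carries back to $\textnormal{SAT}^{n,r} \land \textnormal{REF}^{n,r}_{s,t}$.

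The only genuine subtlety — really a bookkeeping task, as Theorem \ref{thm:main-lower-bound-REF-F-s-t-Introduction} does the heavy lifting — is to spell out the encoding of $\textnormal{SAT}^{n,r}$ carefully enough to confirm that its restriction by $\rho$ contains the CNF $F(z)$ on the $z$-variables. In the natural encodings used in \cite{Atserias-Bonet2004} and \cite{atserias-muller2019-ref}, each clause of $\textnormal{SAT}^{n,r}$ asserting that the $i$-th input clause is satisfied by $z$ reduces, once the $i$-th clause is pinned to a specific clause of $F$, to exactly that clause read on the assignment variables, so the reduction goes through directly.
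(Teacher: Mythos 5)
Your proposal follows essentially the same route as the paper: restrict the shared $C$-variables so they spell out a hard unsatisfiable $F$, observe that the two conjuncts then become variable-disjoint, apply Theorem \ref{thm:main-lower-bound-REF-F-s-t-Introduction} to the $\textnormal{REF}$ side, and apply a classical resolution lower bound to the $\textnormal{SAT}$ side.

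One place where you are slightly too optimistic is in asserting that $\textnormal{SAT}^{n,r}$ restricted by $\gamma_F$ ``collapses to $F(z)$'' (so that its resolution hardness is immediate). In the paper's encoding, $\textnormal{SAT}^{n,r}$ carries intermediate variables $T(m,\ell,b)$ besides the assignment variables $T(\ell)$: after assigning the $C$-variables, clauses \eqref{satclause:at-least-one-lit-sat}--\eqref{satclause:sat-lit-in-clause} remain as a CNF over the $T(m,\ell,b)$ and $T(\ell)$ variables, and these are \emph{not} syntactically the clauses of $F$. The paper handles this via Lemma \ref{lem:subst-to-SAT}: a further variable \emph{substitution} $\tau$ mapping each $T(m,\ell,b)$ to a constant or to the literal $x_\ell^b$, and each $T(\ell)$ to $x_\ell$, is needed to turn $\textnormal{SAT}^{n,r}\restriction\gamma_F$ into $F$ (plus tautologies). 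The substitution preserves refutation size, so the conclusion you want is correct, but it requires a substitution rather than just a restriction, and this is worth flagging rather than waving off as direct bookkeeping. The remaining differences are cosmetic: you use Tseitin on expanders where the paper uses the pigeonhole principle, and you state the disjoint-conjuncts observation as a lemma (which, strictly, requires the standard weakening-elimination step to make the ``pure ancestors'' argument airtight) where the paper states it inline.
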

The proof of the theorem also yields new examples of CNFs exponentially separating $\textnormal{Res}(2)$ from Resolution.

\begin{theorem}
\label{thm:examples-separating-Introduction}
Let $\delta_1 > 0$ and let $\{A_n\}_{n \geq 1}$ be a family of unsatisfiable CNFs such that $A_n$ is in $n$ variables, has the number of clauses polynomial in $n$, and has no resolution refutations of length at most $2^{n^{\delta_1}}$. Then there is $\delta > 0$ and a polynomial $p$ such that $A_n \land \textnormal{REF}^{A_n}_{n+1,p(n)}$ has no resolution refutations of length at most $2^{n^{\delta}}$ and has polynomial size $\textnormal{Res}(2)$ refutations.
\end{theorem}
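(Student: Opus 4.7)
The plan is to establish the lower bound by combining Theorem~\ref{thm:main-lower-bound-REF-F-s-t-Introduction} with a disjoint-variables argument, and to produce the $\textnormal{Res}(2)$ refutation by the usual reflection-style simulation, in the spirit of the reflection-principle upper bound of \cite{Atserias-Bonet2004}.

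For the lower bound, I would first choose $p$ to be a polynomial large enough that $t := p(n)$ satisfies the hypotheses of Theorem~\ref{thm:main-lower-bound-REF-F-s-t-Introduction} with $s = n+1$ and $r = r(n)$ the polynomial bounding the number of clauses of $A_n$; concretely one may take $p(n) \geq r(n)^{3+\epsilon}$ for a fixed $\epsilon > 0$. Theorem~\ref{thm:main-lower-bound-REF-F-s-t-Introduction} then gives a lower bound $2^{p(n)^{\delta'}}$ on resolution refutations of $\textnormal{REF}^{A_n}_{n+1,p(n)}$ alone. The key transfer observation is that the variables of $A_n$ (the $x_i$'s) and those of $\textnormal{REF}^{A_n}_{n+1,p(n)}$ (describing the purported refutation) are disjoint. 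For two unsatisfiable CNFs $\phi(\bar x) \land \psi(\bar y)$ on disjoint variable sets, every resolution refutation $\pi$ contains a sub-refutation of one of the conjuncts: each clause derived in $\pi$ has its variables entirely on one side because resolution cannot mix the sides (they share no variable), and the empty clause is produced from a unit pair $\{p\}, \{\neg p\}$ lying on one side, whose ancestor sub-DAG consists of clauses on that side and uses only axioms of that side. Applying this, any refutation of $A_n \land \textnormal{REF}^{A_n}_{n+1,p(n)}$ of length $L$ gives $L \geq \min\!\bigl(2^{n^{\delta_1}},\, 2^{p(n)^{\delta'}}\bigr) \geq 2^{n^{\delta}}$ for a suitable $\delta > 0$ depending on $\delta_1$, $\delta'$, and the degree of $p$.

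For the upper bound, I would give a polynomial-size $\textnormal{Res}(2)$ refutation by inductively deriving, for each line $\ell$ of the purported refutation in topological order, the 2-DNF
\[
\Phi_\ell \ := \ \bigvee_{i=1}^{n} \bigl( (u^{+}_{\ell,i} \land x_i) \lor (u^{-}_{\ell,i} \land \neg x_i) \bigr),
\]
expressing that some literal of the clause at line $\ell$ is satisfied by the assignment $\bar x$; here $u^{\pm}_{\ell,i}$ are the $\textnormal{REF}^{A_n}$-variables recording which literals occur on line $\ell$. The base case for the initial lines follows by combining the $A_n$-axioms $C_1, \ldots , C_r$ with the structural $\textnormal{REF}^{A_n}$-axioms fixing those lines to be exactly $C_1, \ldots , C_r$. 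The inductive step uses the $\textnormal{REF}^{A_n}$-axioms stating that line $\ell$ is the resolvent of two earlier lines $\ell_1, \ell_2$ on some variable: combining the already-derived $\Phi_{\ell_1}$ and $\Phi_{\ell_2}$ by case analysis on the sign of the resolved variable yields $\Phi_\ell$, because whichever sign is falsified, the corresponding parent's satisfied literal must differ from the resolved literal and is therefore inherited by $\ell$. At the final line, $\Phi_\ell$ is the empty disjunction, which is false, producing the contradiction.

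The main technical burden will be in the $\textnormal{Res}(2)$ upper bound, where the inductive step requires a case split over the $O(n \cdot p(n)^{2})$ possibilities for the resolved variable and the parent-line pair, each realised by a short derivation from the corresponding $\textnormal{REF}^{A_n}$-axioms; this is routine reflection bookkeeping but has to be set up with care so that the overall $\textnormal{Res}(2)$ refutation remains polynomial in size. The lower bound, by contrast, is a clean consequence of Theorem~\ref{thm:main-lower-bound-REF-F-s-t-Introduction} and the disjoint-variables principle.
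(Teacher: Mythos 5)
Your approach matches the paper's on both halves. For the lower bound you invoke the same two ingredients: Theorem~\ref{thm:main-lower-bound-REF-F-s-t-Introduction} applied to the $\textnormal{REF}^{A_n}_{n+1,p(n)}$ conjunct and the assumed hardness of $A_n$ for the other, combined via a disjoint-variables decomposition (the paper leaves this step implicit). For the upper bound your $\Phi_\ell$ is exactly the paper's $D_{i,j}$ with $x_\ell$ substituted for the $T(\ell)$ variables, and your proposed inductive derivation is the proof of Theorem~\ref{thm:refl-princ-Upper-bound-Introduction} replayed in the special case; the paper instead obtains the refutation by substituting $\gamma_{A_n}$ and $\tau$ (Lemma~\ref{lem:subst-to-SAT}) into the already-proved generic $\textnormal{Res}(2)$ refutation of $\textnormal{SAT}^{n,r}\land\textnormal{REF}^{n,r}_{n+1,p(n)}$, which saves you the ``routine but careful'' bookkeeping you anticipate. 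One small caution on the lower bound: your justification that ``each clause derived in $\pi$ has its variables entirely on one side'' is false under this paper's definition of resolution, which builds weakening into the rule (so a clause can acquire literals from both sides). The conclusion---that any refutation of a variable-disjoint conjunction contains a refutation of one conjunct, of no greater length---is still correct, but you need either to first strip weakenings (standard and length-non-increasing) or to run the usual interpolation-style labelling induction, which tolerates weakenings; and you should also take $r=\max\{r(n),n\}$ so that the hypothesis $r\geq n$ of Theorem~\ref{thm:main-lower-bound-REF-F-s-t-Introduction} is met.
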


A $\textnormal{Res}(2)$ upper bound for $\textnormal{SAT}^{n,r} \land \textnormal{REF}^{n,r}_{s,t}$, needed for completing the separation by this formula as well as by the formulas in Theorem \ref{thm:examples-separating-Introduction}, is stated in the following theorem.

\begin{theorem}
\label{thm:refl-princ-Upper-bound-Introduction}
The negation of the reflection principle for Resolution expressed by the formula $\textnormal{SAT}^{n,r} \land \textnormal{REF}^{n,r}_{s,t}$ has $\text{Res}(2)$ refutations of size $O(trn^2+ tr^2 + st^2n^3 + st^3n)$.
\end{theorem}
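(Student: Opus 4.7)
The plan is the standard reflection-principle derivation carried out one refutation line at a time. For each line $\ell$ of the alleged refutation encoded in $y$, I aim to derive in $\textnormal{Res}(2)$ the statement that the assignment $z$ supplied by $\textnormal{SAT}^{n,r}$ satisfies the clause at $\ell$. Applied to the final line of the refutation, this yields $\bot$, since the final line is forced to be empty. The reason $\textnormal{Res}(2)$ succeeds where, by \cite{Atserias-Bonet2004}, Resolution needs almost-exponential size is that the satisfaction statement
\[
S_\ell \;:=\; \bigvee_{i=1}^{n}\bigvee_{b\in\{0,1\}} \bigl(p_{\ell,i}^{b}\,\wedge\,z_i^{b}\bigr)
\]
is a single $2$-DNF, i.e.\ one $\textnormal{Res}(2)$ line; here $p_{\ell,i}^{b}$ is the $y$-variable ``line $\ell$ contains the literal $x_i^b$'', and $z_i^{1}=z_i$, $z_i^{0}=\neg z_i$. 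The $\textnormal{REF}$-axioms that force every $p_{\ell_{\mathrm{final}},i}^{b}=0$ collapse $S_{\ell_{\mathrm{final}}}$ to $\bot$.

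Derive the $S_\ell$'s by induction on the level of $\ell$ in the levelled-graph structure. Base case: $\ell$ is an initial clause. The $\textnormal{REF}$-encoding supplies, in unary, a disjunction $\bigvee_{j=1}^{r}a_{\ell,j}$ asserting that $\ell$ is a copy of some $C_j$ of $F$, together with implications of the form $a_{\ell,j}\wedge[x_i^b\in C_j]\to p_{\ell,i}^{b}$ (where $[x_i^b\in C_j]$ is an $x$-variable); $\textnormal{SAT}^{n,r}$ supplies, for each $j$, the clauses witnessing that $z$ satisfies $C_j$. Fixing $j$ and assuming $a_{\ell,j}$, combine these to get $S_\ell$, then case-split over $j\in[r]$ with $\bigvee_j a_{\ell,j}$. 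Inductive case: $\ell$ is a resolvent of parents $\ell_1,\ell_2$ on pivot $x_i$ (selected by the encoding), and the relevant axioms are $p_{\ell_1,i}^{1}=p_{\ell_2,i}^{0}=1$ together with $p_{\ell_1,j}^{b}\to p_{\ell,j}^{b}$ and $p_{\ell_2,j}^{b}\to p_{\ell,j}^{b}$ for each non-pivot literal $x_j^b$. Opening $S_{\ell_1}$: each non-pivot term $(p_{\ell_1,j}^{b}\wedge z_j^{b})$ yields the corresponding term $(p_{\ell,j}^{b}\wedge z_j^{b})$ of $S_\ell$ directly, while the pivot term $(p_{\ell_1,i}^{1}\wedge z_i)$ forces $z_i=1$; in that regime, open $S_{\ell_2}$: each of its non-pivot terms again yields a term of $S_\ell$, while its pivot term $(p_{\ell_2,i}^{0}\wedge\neg z_i)$ is killed by $z_i$. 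Finally, discharge the outer case-split over the triple $(\ell_1,\ell_2,i)$ using the unary ``exactly-one parent/pivot'' axioms of $\textnormal{REF}$.

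The main obstacle, and what dictates the shape of the four-term bound $O(trn^2+tr^2+st^2n^3+st^3n)$, is the size bookkeeping forced by the unary encoding: the inductive step for each line requires iterating over all candidate triples $(\ell_1,\ell_2,i)$ rather than reading one off, with each triple contributing a double opening of cost roughly $O(n^2)$, and the base case for each initial-clause position iterates over $j\in[r]$. I expect the two $tr$-terms to track the base case (position-clause matching and the $j$-discharging) and the two $st$-terms to track the inductive step (the two rounds of opening, weighted by the two factors coming from the parent index and from the pivot variable); matching the precise constants is tedious but routine. Organising all of this as a single lemma that produces, for each $\ell$, a $\textnormal{Res}(2)$ derivation of $S_\ell$ from the lower-level $S_{\ell'}$ in the promised number of steps, and then summing over $\ell$, completes the proof.
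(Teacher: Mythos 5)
Your approach is the same as the paper's: define the satisfaction statement $S_\ell$ (the paper's $D_{i,j} = \bigvee_{\ell,b}(D(i,j,\ell,b)\land T(\ell)^b)$), observe it is a single $2$-DNF and hence one $\textnormal{Res}(2)$ line, derive it by induction on the level, and finish by cutting $D_{s,t}$ against the empty-clause axioms. The base case and inductive case also match.

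One point worth flagging in your size bookkeeping. You propose to ``discharge the outer case-split over the triple $(\ell_1,\ell_2,i)$,'' which implicitly puts both parent premises $\neg L(i,j,j')$ and $\neg R(i,j,j'')$ in the same clause; with each triple costing $O(n^2)$ for the double opening, the inductive step for one line comes to $O(t^2 n\cdot n^2)$ and the total to $O(st^3n^3)$, which overshoots the stated $O(st^2n^3 + st^3n)$. The paper avoids ever mentioning two parent indices at once: for each $b\in\{0,1\}$ and each single candidate parent $j'$ it derives $\neg P_{1-b}(i,j,j')\lor\neg V(i,j,\ell)\lor T(\ell)^{1-b}\lor D_{i,j}$ (where $P_1=L$, $P_0=R$), discharges $j'$ against the $L$-dom or $R$-dom axiom to get $\neg V(i,j,\ell)\lor T(\ell)^{1-b}\lor D_{i,j}$, then combines the $b=0$ and $b=1$ results by a single cut on $T(\ell)$, and finally discharges $\ell$ against the $V$-dom axiom. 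This two-round handling of left and right parents (joined via the $T(\ell)$ cut rather than via a simultaneous case-split) is what brings the inductive step down to $O(tn^3 + t^2 n)$ per line and gives the stated bound.
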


A polynomial size $\textnormal{Res}(2)$ upper bound on a similar encoding of the negation of the reflection principle for Resolution was proved in \cite{Atserias-Bonet2004}. We simplify the proof and adapt it to $\textnormal{SAT}^{n,r} \land \textnormal{REF}^{n,r}_{s,t}$.

\subsection{Outline of the Paper}
The rest of the paper is organized as follows.

In Section \ref{sec:preliminaries} we give the necessary preliminaries. 

In Section \ref{sec:res-refut-of-s-t-definition}, Resolution of $s$ levels of $t$ clauses is introduced, and the clauses of the refutation statement $\textnormal{REF}^{F}_{s,t}$ for this refutation system are listed. We show that this system simulates Resolution with at most quadratic increase in length.

In Section \ref{sec:a-lower-bound-on-refFst} we prove Theorem \ref{thm:main-lower-bound-REF-F-s-t-Introduction}. That Theorem \ref{thm:main-lower-bound-REF-F-s-t-Introduction} also answers the original lower bound question from \cite{atserias-muller2019-ref} is shown in Appendix \ref{sec:REF-AM}.

In Section \ref{sec:refl-princ-for-res} we define the formula $\textnormal{SAT}^{n,r} \land \textnormal{REF}^{n,r}_{s,t}$ and we prove Theorems \ref{thm:refl-princ-Lower-bound-Introduction} and \ref{thm:examples-separating-Introduction}. 

In Section \ref{sec:refl-princ-upper-bound} we prove Theorem \ref{thm:refl-princ-Upper-bound-Introduction}.

\section{Preliminaries}
\label{sec:preliminaries}
For an integer $s$, the set $\{1, \ldots, s\}$ is denoted by $[s]$. 
We write $\dom(\sigma), \im(\sigma)$ for the domain and image of a function $\sigma$.
Two functions $\sigma, \tau$ are \emph{compatible} if $\sigma \cup \tau$ is a function.
If $x$ is a propositional variable, the \emph{positive literal of} $x$, denoted by $x^1$, is $x$, and the \emph{negative literal of} $x$, denoted by $x^0$, is $\neg x$. A \emph{clause} is a set of literals. A clause is written as a disjunction of its elements. A \emph{term} is a set of literals, and is written as a conjunction of the literals. A \emph{CNF} is a set of clauses, written as a conjunction of the clauses. A $k$-\emph{CNF} is a CNF whose every clause has at most $k$ literals. A \emph{DNF} is a set of terms, written as a disjunction of the terms. A $k$-\emph{DNF} is a DNF whose every term has at most $k$ literals. We will identify 1-DNFs with clauses. 
A clause is \emph{non-tautological} if it does not contain both the positive and negative literal of the same variable. A clause $C$ is a \emph{weakening} of a clause $D$ if $D \subseteq C$. A clause $D$ is the \emph{resolvent of} clauses $C_1$ and $C_2$ \emph{on} a variable $x$ if $x \in C_1, \neg x \in C_2$ and $D = (C_1\setminus \{x\}) \cup (C_2 \setminus \{\neg x\})$. If $E$ is a weakening of the resolvent of $C_1$ and $C_2$ on $x$, we say that $E$ is obtained by the \emph{resolution rule} from $C_1$ and $C_2$, and we call $C_1$ and $C_2$ the \emph{premises} of the rule.

Let $F$ be a CNF and $C$ a clause. A \emph{resolution derivation of} $C$ \emph{from} $F$ is a sequence of clauses $\Pi = (C_1, \ldots , C_s)$ such that $C_s = C$ and for all $u \in [s]$, $C_u$ is a weakening of a clause in $F$, or there are $v,w \in [u-1]$ such that $C_u$ is obtained by the resolution rule from $C_v$ and $C_w$. 
A \emph{resolution refutation of} $F$ is a resolution derivation of the empty clause from $F$.
The \emph{length} of a resolution derivation $\Pi = (C_1, \ldots , C_s)$ is $s$. 
For $u \in [s]$, the \emph{height of} $u$ \emph{in} $\Pi$ is the maximum $h$ such that there is a subsequence $(C_{u_1}, \ldots , C_{u_h})$ of $\Pi$ in which $u_h = u$ and for each $i \in [h-1]$, $C_{u_i}$ is a premise of a resolution rule by which $C_{u_{i+1}}$ is obtained in $\Pi$. 
The \emph{height} of $\Pi$ is the maximum height of $u$ in $\Pi$ for $u \in [s]$. 

A \emph{partial assignment} to the variables $x_1, \ldots , x_n$ is a partial map from $\{ x_1, \ldots , x_n \}$ to $\{0,1\}$. Let $\sigma$ be a partial assignment. The CNF $F \! \restriction \! \sigma$ is formed from $F$ by removing every clause containing a literal satisfied by $\sigma$, and removing every literal falsified by $\sigma$ from the remaining clauses. If $\Pi = (C_1, \ldots , C_s)$ is a sequence of clauses, $\Pi \!\restriction \! \sigma$ is formed from $\Pi$ by the same operations. Note that if $\Pi$ is a resolution refutation of $F$, then $\Pi \!\restriction \! \sigma$ is a resolution refutation of $F \!\restriction \! \sigma$. 

The $\text{Res}(k)$ refutation system is a generalization of Resolution. Its lines are $k$-DNFs and it has the following inference rules ($A,B$ are $k$-DNFs, $j \in [k]$, and $l,l_1,\ldots, l_j$ are literals):
\begingroup
\setlength{\tabcolsep}{15pt} %
\renewcommand{\arraystretch}{2.5}  %
\begin{center}
\begin{tabular}{c c}
\AxiomC{$A \lor l_1$}
\AxiomC{$B \lor (l_2 \land \cdots \land l_j)$}
\RightLabel{\,$\land$-introduction}
\BinaryInfC{$A \lor B \lor (l_1 \land \cdots \land l_j)$}
\DisplayProof 
&
\AxiomC{\phantom{A}}
\RightLabel{\,Axiom}
\UnaryInfC{$x \lor \neg x$}
\DisplayProof 
\\
\AxiomC{$A \lor (l_1 \land \cdots \land l_j)$}
\AxiomC{$B \lor \neg l_1 \lor \cdots \lor \neg l_j$}
\RightLabel{\,Cut}
\BinaryInfC{$A \lor B$}
\DisplayProof 
& 
\AxiomC{$A$}
\RightLabel{\,Weakening}
\UnaryInfC{$A \lor B$}
\DisplayProof
\\
\end{tabular}
\end{center}
\endgroup
\noindent Let $F$ be a CNF. A $\textit{Res}(k)$ \emph{derivation from} $F$ is a sequence of $k$-DNFs ($D_1,\ldots, D_s$) so that each $D_i$ either belongs to $F$ or follows from the preceding lines by an application of one of the inference rules. A $\textit{Res}(k)$ \emph{refutation of} $F$ is a $\text{Res}(k)$ derivation from $F$ whose final line is the empty clause. The \emph{size} of a $\text{Res}(k)$ derivation is the number of symbols in it.

\section{Resolution Refutations of $\textit{s}$ Levels of $\textit{t}$ Clauses}
\label{sec:res-refut-of-s-t-definition}
We introduce a variant of Resolution in which the clauses forming a refutation are arranged in layers. 
\begin{definition}
 Let $F$ be a CNF of $r$ clauses in $n$ variables $x_1, \ldots, x_n$.  We say that $F$ has a \emph{resolution refutation of $s$ levels of $t$ clauses} if there is a sequence of clauses $C_{i,j}$ indexed by all pairs $(i,j) \in [s] \times [t]$, such that each clause $C_{1,j}$ on the first level is a weakening of a clause in $F$, each clause $C_{i,j}$ on level $i \! \in \! [s] \! \setminus \! \{1\}$ is a weakening of the resolvent of two clauses from level $i-1$ on a variable, and the clause $C_{s,t}$ is empty. 
\end{definition}

The following proposition shows that this system quadratically simulates Resolution and preserves the refutation height.
The proof uses a simple self-replicating pattern both to transport a premise of the resolution rule to the required level and to fill in all clauses $C_{i,j}$ that do not directly participate in the simulation.

\begin{proposition}
\label{proposition:simulation-of-Res}
If a $(n-1)$-CNF $F$ in $n$ variables has a resolution refutation of height $h$ and length $s$, then $F$ has a resolution refutation of $h$ levels of $3s$ clauses. 
\end{proposition}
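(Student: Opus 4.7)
The plan is to lay the given refutation $\Pi = (C_1, \ldots, C_s)$ out on a grid with $h$ rows and $3s$ columns, assigning to each step $u \in [s]$ three consecutive columns and placing $C_u$ (together with two auxiliary clauses) in those columns on every level from the height of $u$ upward. First I assign to each $u$ the level $\ell(u) \in [h]$ equal to the height of $u$ in $\Pi$, so that $\ell(u) = 1$ for axiom weakenings and $\ell(u) = 1 + \max(\ell(v), \ell(w))$ whenever $C_u$ is obtained by resolution from premises $v, w$. In particular $\ell(s) \leq h$.

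The self-replicating pattern uses, for each $u$, a variable $x_u$ not appearing in $C_u$, and maintains the triple $C_u$, $C_u \vee x_u$, $C_u \vee \neg x_u$ on every level $i \in \{\ell(u), \ldots, h\}$. The pattern is consistent with the leveled rules because, on any level $i > \ell(u)$, the two weakenings $C_u \vee x_u$ and $C_u \vee \neg x_u$ sitting on level $i-1$ resolve on $x_u$ to exactly $C_u$, and each of those weakenings is a superset of $C_u$ and hence of this resolvent. On level $1$ the triple works because if $\ell(u) = 1$ then $C_u$ is already a weakening of some axiom of $F$, and adding the literal $x_u$ or $\neg x_u$ preserves that property.

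To complete the construction I would, at each level $i \in [h]$ and each $u \in [s]$, fill columns $3u-2,\, 3u-1,\, 3u$ as follows. If $\ell(u) \leq i-1$, the three clauses are replicated from level $i-1$ as described. If $\ell(u) = i > 1$, the triple is seeded by taking the original premises $C_v, C_w$ of $C_u$ in $\Pi$, which satisfy $\ell(v), \ell(w) \leq i-1$ and therefore appear in their own slots on level $i-1$, and resolving them on the variable used in $\Pi$; the resolvent $C_u$ (up to weakening) together with its two further weakenings $C_u \vee x_u$, $C_u \vee \neg x_u$ all go on level $i$. If $\ell(u) > i$, the three slots are padding: copies of the triple for $u = 1$, which is available on every level since $\ell(1) = 1$. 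Finally I permute the columns on level $h$ so that the empty clause $C_s$ sits at position $3s$, which is possible because $\ell(s) \leq h$.

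The main obstacle I anticipate is the routine but nontrivial bookkeeping, namely checking at every level $i > 1$ that each of the placed clauses is genuinely a weakening of a resolvent of two clauses on level $i-1$ in each of the three cases above (replication, seeded step, padding). A minor side issue is ensuring the existence of a variable $x_u$ not appearing in $C_u$: for $u$ with $\ell(u) = 1$ this follows immediately from the $(n-1)$-CNF hypothesis on $F$, and for the remaining $u$ it can be arranged by a standard preprocessing of $\Pi$ that replaces each resolvent by its tightest form, shedding literals introduced only by weakening, without increasing the height or length.
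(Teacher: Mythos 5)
Your construction is essentially identical to the paper's: you use the same self-replicating triple $(C_u,\, C_u \vee x_u,\, C_u \vee \neg x_u)$ placed in three dedicated columns from level $h_u$ upward, padded below $h_u$ by a copy of the triple for $C_1$, with the real premises of $C_u$ resolved at level $h_u$; the paper merely presents this as an induction on a prefix of $\Pi$ rather than as a direct grid-filling, and its WLOG assumption (every clause non-tautological, and either in $F$ or an exact resolvent) plays the same role as your tightening preprocessing.
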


\begin{proof}
Let $\Pi$ be a resolution refutation of $F$ of height $h$ and length $s$. Assume that $\Pi$ is $(C_1, \ldots , C_s)$, and that without loss of generality $C_j$, $j \in [s]$, is non-tautological and either $C_j \in F$ or $C_j$ is the resolvent of $C_{j_1}, C_{j_2}$ for some $j_1,j_2 < j$. For $j \in [s]$, let $h_j$ be the height of $j$ in $\Pi$. We prove by induction on $s' \in [s]$ the following: There is a resolution derivation of $C_{s'}$ of $h$ levels of $3s'$ clauses such that for each $j \in [s']$ and $i \in \{h_j,\ldots,h\}$, $C_{i, 3j} = C_j$.

Base step: $s'=1$. Pick a variable $x$ such that no literal of $x$ is in $C_1$. Such a variable exists by our assumptions. For all $i \in [h]$, set $(C_{i,1},C_{i,2},C_{i,3}) = (C_1 \cup \{x\}, C_1 \cup \{\neg x\}, C_1)$. This is a valid resolution derivation of $h$ levels of 3 clauses: the clauses on the first level are weakenings of $C_1 \in F$, and each clause on any subsequent level is derived from the first and second clause of the previous level.

Induction step: Assume the statement holds for $s'$, as witnessed by a derivation $\Pi'$. We prove it for $s'+1$. Pick a variable $x$ such that no literal of $x$ is in $C_{s'+1}$; it exists by our assumptions. 
If $C_{s'+1} \in F$, define, for each $i\in [h]$,  $(C_{i,3s'+1},C_{i,3s'+2},C_{i,3s'+3}) = (C_{s'+1} \cup \{x\}, C_{s'+1} \cup \{\neg x\}, C_{s'+1})$. 
If $C_{s'+1}$ is obtained by the resolution rule in $\Pi$, the premises of the rule appear on level $h_{s'+1}-1$ in $\Pi'$ by the induction hypothesis. So we can extend $\Pi'$ by defining, for $i \in \{h_{s'+1},\ldots , h\}$, $(C_{i,3s'+1},C_{i,3s'+2},C_{i,3s'+3}) = (C_{s'+1} \cup \{x\}, C_{s'+1} \cup \{\neg x\}, C_{s'+1})$. Next, for each $i \in [h_{s'+1}-1]$, define $(C_{i,3s'+1},C_{i,3s'+2},C_{i,3s'+3}) = (C_1 \cup \{x\}, C_1 \cup \{\neg x\}, C_1)$. It is easy to check that this is a valid derivation and it satisfies the required properties.
\end{proof}

We proceed to our formalization of the refutation statement for this refutation system. 
Let $n,r,s,t$ be integers. Let $F$ be a CNF consisting of $r$ clauses $C_1, \ldots , C_r$ in $n$ variables $x_1, \ldots , x_n$. We define a propositional formula $\textnormal{REF}^{F}_{s,t}$ expressing that $F$ has a resolution refutation of $s$ levels of $t$ clauses.

We first list the variables of $\textnormal{REF}^{F}_{s,t}$. $D$-\emph{variables} $D(i,j,\ell,b)$, $i \in [s]$, $j \in [t], \ell \in [n], b \in \{0,1\}$, encode clauses $C_{i,j}$ as follows: $D(i,j,\ell,1)$ (resp. $D(i,j,\ell,0)$) means that the literal $x_\ell$ (resp. $\neg x_\ell$) is in $C_{i,j}$.  
$L$-\emph{variables} $L(i,j,j')$ (resp. $R$-\emph{variables} $R(i,j,j')$), $i \! \in \! [s] \! \setminus \! \{1\}, j,j' \in [t]$, say that $C_{i-1,j'}$ is a premise of the resolution rule by which $C_{i,j}$ is obtained, and it is the premise containing the positive (resp. negative) literal of the resolved variable. 
$V$-\emph{variables} $V(i,j,\ell)$, $i \! \in \! [s] \! \setminus \! \{1\}, j \in [t], \ell \in [n]$, say that $C_{i,j}$ is obtained by resolving on $x_\ell$.
$I$-\emph{variables} $I(j,m)$, $j \in [t], m \in [r]$, say that $C_{1,j}$ is a weakening of $C_m$.

$\textnormal{REF}^{F}_{s,t}$ is the union of the following fifteen sets of clauses:
\begin{alignat}{2}
\label{refclause:axioms} & \neg I(j,m) \lor D(1,j,\ell,b)   & j \! \in \! [t], m \! \in \! [r], b \! \in \! \{0,1\}, x_\ell^b \! \in \! C_m,
\intertext{clause $C_{1,j}$  contains the literals of $C_m$ assigned to it by $I(j,m)$,}
\label{refclause:nontaut} & \neg D(i,j,\ell,1) \lor \neg D(i,j,\ell,0)  & i \! \in \! [s], j \! \in \! [t], \ell \! \in \! [n],
\intertext{no clause $C_{i,j}$ contains $x_\ell$ and $\neg x_\ell$ at the same time,}
\label{refclause:res-L-cut} & \neg L(i,j,j') \lor \neg V(i,j,\ell) \lor D(i-1,j',\ell,1) & i \! \in \! [s] \! \setminus \! \{1\}, j,j' \! \in \! [t], \ell \! \in \! [n], 
\\
\label{refclause:res-R-cut} & \neg R(i,j,j') \lor \neg V(i,j,\ell) \lor D(i-1,j',\ell,0) & i \! \in \! [s] \! \setminus \! \{1\}, j,j' \! \in \! [t], \ell \! \in \! [n],
\intertext{clause $C_{i-1,j'}$ used as the premise given by $L(i,j,j')$ (resp. $R(i,j,j')$) in resolving on $x_\ell$ must contain $x_\ell$ (resp. $\neg x_\ell$),}
\begin{split}
\label{refclause:res-L-transf} \mathrlap{ \neg L(i,j,j') \lor \neg V(i,j,\ell) \lor \neg D(i-1,j',\ell',b) \lor D(i,j,\ell',b)} \\
\mathrlap{ \, \, \; \; \; \qquad \qquad \qquad \qquad  i \! \in \! [s] \! \setminus \! \{1\}, j,j' \! \in \! [t], \ell,\ell' \! \in \! [n], b \! \in \! \{0,1\}, (\ell',b) \neq (\ell,1),} 
\end{split}
\\[1ex]
\begin{split}
\label{refclause:res-R-transf} \mathrlap{ \neg R(i,j,j') \lor \neg V(i,j,\ell) \lor \neg D(i-1,j',\ell',b) \lor D(i,j,\ell',b)} \\
 \mathrlap{ \, \, \; \; \; \qquad \qquad \qquad \qquad  i \! \in \! [s] \! \setminus \! \{1\}, j,j' \! \in \! [t], \ell,\ell' \! \in \! [n], b \! \in \! \{0,1\}, (\ell',b) \neq (\ell,0),}
\end{split}
\intertext{clause $C_{i,j}$ derived by resolving on $x_\ell$ must contain each literal different from $x_\ell$ (resp. $\neg x_\ell$) from the premise given by $L(i,j,j')$ (resp. $R(i,j,j')$),} 
\label{refclause:empty-clause} & \neg D(s,t,\ell,b) & \ell \! \in \! [n], b \! \in \! \{0,1\}, 
\intertext{clause $C_{s,t}$ is empty,}
\label{refclause:V-dom} & V(i,j,1) \lor V(i,j,2) \lor \ldots \lor V(i,j,n)  & i \! \in \! [s] \! \setminus \! \{1\}, j\! \in \! [t], \\
\label{refclause:I-dom} & I(j,1) \lor I(j,2) \lor \ldots \lor I(j,r)  & j\! \in \! [t], \\
\label{refclause:L-dom} & L(i,j,1) \lor L(i,j,2) \lor \ldots \lor L(i,j,t)  & i \! \in \! [s] \! \setminus \! \{1\}, j\! \in \! [t], \\
\label{refclause:R-dom} & R(i,j,1) \lor R(i,j,2) \lor \ldots \lor R(i,j,t)  & i \! \in \! [s] \! \setminus \! \{1\}, j\! \in \! [t], \\
\label{refclause:V-func} & \neg V(i,j,\ell) \lor \neg V(i,j,\ell')  & i \! \in \! [s] \! \setminus \! \{1\}, j\! \in \! [t], \ell,\ell' \! \in \! [n], \ell \neq \ell',\\
\label{refclause:I-func} & \neg I(j,m) \lor \neg I(j,m')  &  j\! \in \! [t], m,m' \! \in \! [r], m \neq m',\\
\label{refclause:L-func} & \neg L(i,j,j') \lor \neg L(i,j,j'')  & i \! \in \! [s] \! \setminus \! \{1\}, j,j',j'' \! \in \! [t], j' \neq j'',\\
\label{refclause:R-func} & \neg R(i,j,j') \lor \neg R(i,j,j'')  & i \! \in \! [s] \! \setminus \! \{1\}, j,j',j'' \! \in \! [t], j' \neq j'', 
\end{alignat}
the $V,I,L,R$-variables define functions with the required domains and ranges.

\section{A Lower Bound on Lengths of Resolution Refutations of Resolution Refutation Statements}
\label{sec:a-lower-bound-on-refFst}

We restate Theorem \ref{thm:main-lower-bound-REF-F-s-t-Introduction} from the Introduction.

\begin{theorem} \label{thm:main-lower-bound-REF-F-s-t}
For each $\epsilon>0$ there is $\delta>0$ and an integer $t_0$ such that if $n,r,s,t$ are integers satisfying 
\begin{equation} \label{eqn:conditions-on-nrst}
 t\geq s \geq n+1,\qquad r \geq n \geq 2,\qquad t \geq r^{3+\epsilon}, \qquad t \geq t_0,
\end{equation}
and $F$ is an unsatisfiable CNF consisting of $r$ clauses $C_1, \ldots , C_r$ in $n$ variables $x_1 , \ldots , x_n$, then any resolution refutation of $\textnormal{REF}^{F}_{s,t}$ has length greater than $2^{t^{\delta}}$.
\end{theorem}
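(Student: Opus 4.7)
The plan is to combine the Ben-Sasson--Wigderson size-width framework with a carefully designed random restriction and a Prover-Adversary game. Suppose for contradiction that there is a resolution refutation $\Pi$ of $\textnormal{REF}^F_{s,t}$ of length $\leq 2^{t^\delta}$. I will apply a random restriction $\rho$ tailored to the unary encoding: independently for each slot $(i,j)$, with some small probability leave the associated block of $I(j,\cdot)$, $V(i,j,\cdot)$, $L(i,j,\cdot)$, $R(i,j,\cdot)$ variables entirely unset, and otherwise commit to a uniformly random witness in each block (setting that variable to $1$ and all others in the block to $0$), propagating consistently to the $D$-variables via (\ref{refclause:axioms}) and (\ref{refclause:res-L-transf})--(\ref{refclause:res-R-transf}). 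A standard probabilistic computation, using $t \geq r^{3+\epsilon}$ to control the dependency degree between variables in each clause, shows that with positive probability $\rho$ simultaneously satisfies every clause of $\Pi$ of width $\geq w$, for a suitable threshold $w = t^{\delta'}$. The task then reduces to proving that the restricted formula $\textnormal{REF}^F_{s,t}\!\restriction\!\rho$ admits no refutation of width below $w$.

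For this width lower bound I will run a Prover-Adversary game in the spirit of \cite{Pudlak2000} and \cite{Thapen2016}. The Adversary maintains a growing consistent partial assignment $\tau$ extending $\rho$, which she interprets as a partial would-be refutation of $F$ of $s$ levels of $t$ clauses: each slot $(i,j)$ is either free, or committed as (i) a weakening of some axiom $C_m$ (via $I(j,m)$), or (ii) the resolvent on some $x_\ell$ of two committed parent slots (via $L,R,V$). When the Prover exposes a narrow clause $C$ of $\textnormal{REF}^F_{s,t}\!\restriction\!\rho$, the Adversary extends $\tau$ minimally so as to falsify $C$, using the freedom of the unary encoding to pick uncommitted witnesses wherever possible; when forced to revisit an already committed slot, she rotates its witness to a fresh option still left free by $\rho$. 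Because $F$ is unsatisfiable, a commitment on a level-$1$ slot can always be answered by an honest axiom, and at higher levels the Adversary uses the asymmetry of $L,R,V$ to defer fixing the resolved variable as long as possible.

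The hard part will be the book-keeping forced by the unary encoding. A single narrow query touches only few variables, but committing a witness $j'$ for $L(i,j,j')$ together with a variable $x_\ell$ for $V(i,j,\ell)$ immediately triggers, via (\ref{refclause:res-L-transf})--(\ref{refclause:res-R-transf}) and the domain axioms (\ref{refclause:V-dom})--(\ref{refclause:R-dom}), a host of $D$-commitments at level $i$ inherited from the parent slot at level $i-1$. Such cascades threaten either to propagate beyond the current width budget or to conflict with the functional axioms (\ref{refclause:V-func})--(\ref{refclause:R-func}) at an earlier committed slot. Controlling them requires that $\rho$ preserve, at every slot, many unused witnesses into which the Adversary can rotate committed slots whenever conflicts arise; the needed slack, roughly $r$ axiom candidates per level-$1$ slot and $t/r$ parent candidates per higher-level slot, together with the loss in the size-width conversion for a formula on $\Theta(st^2)$ variables, is precisely what the hypothesis $t \geq r^{3+\epsilon}$ and the lower bound $t \geq t_0$ buy. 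With the parameters so balanced, the Adversary survives all $w$ queries without producing an inconsistency, contradicting the existence of a refutation of width below $w$ and hence of the original refutation of length $\leq 2^{t^\delta}$.
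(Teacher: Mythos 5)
Your proposal has the right general flavour --- a random restriction followed by an adversary game, as in Pudl\'ak or Thapen --- but its two central design choices are exactly the ones the paper singles out in Remark~\ref{remark:obstacles} as inadequate, and the proposal does not say how to get around that. Your adversary ``extends $\tau$ minimally so as to falsify $C$.'' This standard width-game invariant cannot be maintained for $\textnormal{REF}^F_{s,t}$: to falsify a clause containing, say, $n/3$ positive $V$-literals at a home pair $(i,j)$ you must commit $V(i,j,\cdot)$ to a value; once $L,R$ there or at a neighbouring slot are also committed, the transfer axioms (\ref{refclause:res-L-transf})--(\ref{refclause:res-R-transf}) cascade $D$-commitments, and ``rotating'' a committed witness later would contradict the earlier $D$-commitments that were derived from it. The paper's adversary (Lemma~\ref{lem:induction-step-admissible}) keeps a deliberately weaker invariant: it falsifies only those literals of the current clause whose variables are in $\dom(\sigma)$, while ensuring $\dom(\sigma)$ covers every variable whose home pair is \emph{important} in the clause (Definition~\ref{def:mentioned-and-important}). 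Literals at unimportant pairs may remain unassigned; this slack is where the argument lives.

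Correspondingly, the measure that the random restriction must shrink is not width but the number of $Z$-important pairs per clause (Lemma~\ref{lem:widths-reduction}): a pair is $L$-important only if the clause contains a negative $L$-literal at that pair or at least $t/2$ positive $L$-literals at it, so a clause can have width close to $t$ yet only one important pair, and bounding raw width does not bound the adversary's book-keeping. Your plan to ``satisfy every clause of width $\geq w$'' controls the wrong quantity. Finally, your restriction is dense (most blocks set, choices propagated to $D$-variables); such a restriction is almost surely inconsistent, since a near-complete random collection of $L,R,V,I$-commitments propagated through (\ref{refclause:res-L-cut})--(\ref{refclause:res-R-transf}) is essentially a random resolution ``refutation'' and will conflict with (\ref{refclause:empty-clause}) and the other axioms. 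The paper's restriction is the opposite, a sparse one: each block is set independently with small probability $p = t^{-a}$, $D$-, $V$-, $I$-, $L$-, $R$-blocks are set \emph{independently} with no propagation, and Lemma~\ref{lem:forbid-in-rho-patterns} shows that with good probability the few set blocks form no connected pattern (in $G_\rho$) that could falsify an axiom, which is what permits the restriction to be well-defined at all.
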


The rest of this section is devoted to a proof of the theorem. We argue by contradiction. Fix $\epsilon>0$ and assume that for each $\delta>0$ and $t_0$ there are integers $n,r,s,t$ satisfying \eqref{eqn:conditions-on-nrst}, an unsatisfiable CNF $F$, and a resolution refutation $\Pi$ of $\textnormal{REF}^{F}_{s,t}$, such that $F$ consists of $r$ clauses $C_1, \ldots , C_r$ in $n$ variables $x_1, \ldots , x_n$, and $\Pi$ has length at most $2^{t^{\delta}}$.

The forthcoming distribution on partial assignments to the variables of $\textnormal{REF}^{F}_{s,t}$ employs in its definition and analysis two important parameters, $p$ and $w$. We choose them in function of $t$ and $\epsilon$ as follows:
\[ p = t^{-a} \textnormal{ with } a = \min \left\{\frac{2+\epsilon/2}{3+\epsilon/2}, \frac{3}{4} \right\}, \qquad w = t^{4/5}. \] 

We now fix values of $t_0, \delta$ for which we will get the desired contradiction. Take $t_0$ so large and $\delta>0$ so small that the inequalities
\begin{align} 
\label{eqn:requirement-on-t-and-delta} 
\max \left\{ e^{-\frac{pw}{3}} + 2s \cdot e^{- \frac{pt}{3}}, e^{-\frac{pt}{8r}} \right\} \cdot 2^{t^{\delta}} + 3s \cdot e^{-\frac{pt}{3}} + 3p + 67 p^3st & < 1, \\
\label{eqn:requirement-on-t} 
10pt +4w & < \frac{t}{4}, \\
\label{eqn:requirement-on-t-second}
e^{e^{\ln(t) -\frac{pt}{3}}}  & < 2,
\end{align}
hold for any $n,r,s,t$ satisfying \eqref{eqn:conditions-on-nrst}.

\begin{definition}
\label{def:home-pair-and-set-to}
For $i \in [s], j,j' \in [t], \ell \in [n], b \in \{0,1\}, m \in [r]$, we say that $(i,j)$ is the \emph{home pair} of the variable $D(i,j,\ell,b)$ (resp. $R(i,j,j')$; $L(i,j,j')$; $V(i,j,\ell')$; $I(j,m)$ if $i=1$).

We write $V(i,j,\cdot)$ to stand for the set $\{V(i,j,\ell) : \ell \in [n] \}$. Similarly, we write $I(j,\cdot), L(i,j,\cdot), R(i,j,\cdot)$ to stand for the corresponding sets of variables, and we denote by $D(i,j,\cdot,\cdot)$ the set of variables $\{D(i,j,\ell,b) : \ell \in [n], b \in \{0,1\}\}$. 

Let $\sigma$ be a partial assignment. We say that $V(i,j,\cdot)$ is \emph{set to} $\ell$ \emph{by} $\sigma$ if $\sigma(V(i,j,\ell)) = 1$ and $\sigma(V(i,j,\ell')) = 0$ for all $\ell' \in [n], \ell' \neq \ell$. Similarly for $I(j,\cdot), L(i,j,\cdot), R(i,j,\cdot)$. We say that $D(i,j,\cdot, \cdot)$ is \emph{set to} a clause $C_{i,j}$ \emph{by} $\sigma$ if for all $\ell \in [n], b \in \{0,1\}$, $\sigma(D(i,j,\ell,b)) = 1$ if $x_\ell^b \in C_{i,j}$ and $\sigma(D(i,j,\ell,b)) = 0$ if $x_\ell^b \not \in C_{i,j}$. 

For $Y \in \{D(i,j,\cdot,\cdot), V(i,j,\cdot), I(j,\cdot), R(i,j,\cdot), L(i,j,\cdot),\}$, we say that $Y$ is \emph{set by} $\sigma$ if $Y$ is set to $v$ by $\sigma$ for some value $v$. We will often omit saying ``by $\sigma$'' if $\sigma$ is clear from the context.
\end{definition}

\begin{definition} \label{def:random-restriction}
A \emph{random restriction} $\rho$ is a partial assignment to the variables of $\textnormal{REF}^{F}_{s,t}$ given by the following experiment:
\begin{enumerate}
\item For each pair $(i,j) \in [s] \times [t]$, with independent probability $p$ include $(i,j)$ in a set $A_D$. Then for each $(i,j) \in A_D$ and for each $\ell \in [n]$, independently, with probability 1/2 choose between including the literal $x_\ell$ or $\neg x_\ell$ in a clause $C_{i,j}$. Set $D(i,j,\cdot, \cdot)$ to $C_{i,j}$. 

\item For each $j \in [t]$, with independent probability $p$ include the pair $(1,j)$ in a set $A_I$. Then for each $(1,j) \in A_I \setminus A_D$, independently, choose at random $m \in [r]$ and set $I(j, \cdot)$ to $m$.

\item For each pair $(i,j) \in \{2, \ldots , s\} \times [t]$, with independent probability $p$ include $(i,j)$ in a set $A_V$. Then for each $(i,j) \in A_V$, independently, choose at random $\ell \in [n]$ and set $V(i,j, \cdot)$ to $\ell$.

\item For each pair $(i,j) \in \{2, \ldots , s\} \times [t]$, with independent probability $p$ include the pair $(i,j)$ in a set $A_{R\!L}$. Then, for each $i \in \{2, \ldots , s\}$, define $A_i := A_{R\!L} \cap (\{i\} \times [t])$ and do the following. If $|A_i| > 2pt$, define $h_i := \emptyset$, $B_{i-1} := \emptyset$. Otherwise, choose at random an injection $h_i$ from $\{L(i,j, \cdot): (i,j) \in A_i\} \cup \{R(i,j, \cdot): (i,j) \in A_i\}$ to $[t]$. Define $B_{i-1} := \{(i-1,j): j \in \im (h_i) \}$. Set $L(i,j, \cdot)$ to $h_i(L(i,j, \cdot))$ and $R(i,j, \cdot)$ to $h_i(R(i,j, \cdot))$ for all $(i,j) \in A_i$.
\end{enumerate}
\end{definition}

\begin{lemma} \label{lem:level-bounds-for-rho}
With probability at least $1 - 3s \cdot e^{-pt/3}$, all of the following are satisfied.
\begin{enumerate}[(i)]
\item \label{item:level-bounds-on-AD} For each $i \in [s]$, the cardinality of $A_D \cap (\{i\} \times [t])$ is at most $2pt$.
\item \label{item:level-bounds-on-ARL-and-AV} For each $i \in \{2, \ldots ,s\}$, the cardinality of $A_i$ is at most $2pt$ and the cardinality of $A_V \cap (\{i\} \times [t])$ is at most $2pt$.
\item \label{item:level-bounds-on-AI} The cardinality of $A_I$ is at most $2pt$.
\end{enumerate}
\end{lemma}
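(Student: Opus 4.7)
The plan is to observe that each of the cardinalities bounded in items (i)--(iii) is, by construction in Definition~\ref{def:random-restriction}, a sum of $t$ independent Bernoulli$(p)$ indicators, and hence has the distribution $\mathrm{Bin}(t,p)$ with mean $pt$. Explicitly, $|A_D \cap (\{i\} \times [t])|$ is obtained by tossing one $p$-biased coin for each $j \in [t]$ in step 1, and the same structure holds for $|A_i|$ and $|A_V \cap (\{i\} \times [t])|$ (from steps 3 and 4) as well as for $|A_I|$ (from step 2). Independence across different values of $j$ inside each of these counts, and the fact that none of them overlap across items, are immediate from the ``independent probability $p$'' clauses in the definition.

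Next I would apply the multiplicative Chernoff bound $\Pr[X \geq (1+\delta)\mu] \leq \exp\!\bigl(-\delta^2 \mu/(2+\delta)\bigr)$ with $\delta = 1$ and $\mu = pt$, giving
\[
 \Pr\!\left[X > 2pt\right] \leq e^{-pt/3}
\]
for each individual count $X$ above. A union bound over the $s$ events in (i), the $2(s-1)$ events in (ii) (namely one bound on $|A_i|$ and one on $|A_V \cap (\{i\} \times [t])|$ for each $i \in \{2,\ldots,s\}$), and the one event in (iii), together at most $3s - 1 \leq 3s$ events in total, yields overall failure probability at most $3s \cdot e^{-pt/3}$, which is exactly the stated bound.

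I do not expect any real obstacle here: the argument is a routine Chernoff-plus-union-bound calculation, and the only things to double-check are (a) the independence structure of each counted set as described above, and (b) the constants in the chosen form of the Chernoff inequality to get precisely the exponent $pt/3$ (rather than, say, $pt/4$). The later random-restriction analysis of $\rho$ will condition on the event of this lemma, which is why the bounds take the uniform form ``at most $2pt$'' on every level.
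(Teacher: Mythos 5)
Your proof is correct and follows essentially the same route as the paper's: identify each cardinality as $\mathrm{Bin}(t,p)$, apply the multiplicative Chernoff bound to get $e^{-pt/3}$ per event, and union-bound over at most $3s$ events. The paper's proof is a one-line statement of exactly this (``By the Chernoff bound and the union bound...''), so your version is merely a more explicit write-up of the same argument, including a slightly sharper count of $3s-1 \le 3s$ events.
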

\begin{proof}
By the Chernoff bound and the union bound it follows that item \ref{item:level-bounds-on-AD} is false with probability at most $s \cdot e^{-pt/3}$. Similarly for the remaining items.
\end{proof}

\begin{definition}
Denote by $G_{\rho}$ the graph with vertices $A_D \cup A_V \cup A_I \cup A_{R\!L}\cup \bigcup_{i \in [s-1]}B_i$, and with edges only between vertices on neighboring levels, such that $(i,j)$ is connected by an edge to $(i-1,j')$ if and only if $h_i(L(i,j,\cdot)) = j'$ (then $(i-1,j')$ is called the \emph{left child} of $(i,j)$) or $h_i(R(i,j,\cdot)) = j'$ (then $(i-1,j')$ is the \emph{right child} of $(i,j)$).
\end{definition}

The following lemma will be used later to show that a random restriction likely does not falsify any clause of $\textnormal{REF}^{F}_{s,t}$.

\begin{lemma} \label{lem:forbid-in-rho-patterns}
With probability at least $1 - 3p - 67 p^3st$, the following are both satisfied.
\begin{enumerate}[(i)]
\item \label{item:forbid-in-rho-last-clause} $(s,t) \not \in (A_D \cup A_{R\!L} \cup A_V)$.
\item \label{item:forbid-in-rho-connected-3-tuple} There is no triple $\left( (i_1,j_1), (i_2,j_2),(i_3,j_3) \right)$ of elements of $[s] \times [t]$, such that all the following hold:
\begin{enumerate}[(a)]
\item \label{item:triple-in-G-rho} For each $u\! \in \! [3]$ there is $X \! \in \! \{ D,V,I,R\!L \}$ with $(i_u,j_u) 
\! \in \!  A_X$,
\item \label{item:triple-has-three-A-in-rho} $|\{ \left(i_u,j_u, X \right) :  u \in [3], X \! \in \! \{ D,V,I,R\!L \}, (i_u,j_u) \! \in \! A_X  \}| \geq 3$,
\item \label{item:triple-connected-in-rho} the subgraph of $G_{\rho}$ consisting of the vertices that are in the triple and their children and all edges that go from a vertex of the triple to its children, is connected.
\end{enumerate}
\end{enumerate}
\end{lemma}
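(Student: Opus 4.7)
The approach is a union bound separately on (i) and (ii), with (ii) handled via the first moment method on the number of bad triples, split by the number $d\in\{1,2,3\}$ of distinct vertices appearing in the triple.

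For (i), the vertex $(s,t)$ is placed independently into each of $A_D$, $A_V$, $A_{R\!L}$ with probability $p$ in steps 1, 3, 4 of the experiment (membership in $A_I$ is irrelevant since $i=s\geq 2$), so a union bound bounds the failure probability by $3p$.

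For (ii), the key preliminary observation is that conditions (a)--(c) depend only on the underlying multi-set of triple vertices, so I union-bound over this multi-set, split by $d$. When $d=1$, a single vertex must lie in at least three of $A_D,A_V,A_I,A_{R\!L}$; this is possible only if it is at level $\geq 2$ and lies in $A_D\cap A_V\cap A_{R\!L}$, giving a contribution $\leq (s-1)tp^3$. When $d=2$ with distinct vertices $u_1,u_2$, the crucial observation is that each $h_i$ is an injection, so two $A_{R\!L}$-vertices on the same level cannot share a child; connectivity therefore forces (up to relabelling) $u_1\in A_{R\!L}$ with $u_2$ a child of $u_1$ on the level directly below. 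Conditionally on $u_1\in A_{R\!L}$, each of $h_{i_1}(L(u_1,\cdot))$ and $h_{i_1}(R(u_1,\cdot))$ is uniform on $[t]$ when defined, so the probability that $u_2$ is a child of $u_1$ is at most $2/t$. The residual requirement that the total number of memberships is $\geq 3$ contributes an extra $O(p^2)$ factor, so each pair contributes $O(p^3/t)$; summing over the $O(st^2)$ adjacent pairs gives a contribution of $O(p^3 st)$.

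When $d=3$, condition (b) is automatic from (a), and the same injectivity argument leaves only two possible connectivity topologies on the three triple vertices: a path $u_a\to u_b\to u_c$ across three consecutive levels (requiring $u_a,u_b\in A_{R\!L}$), and a fork in which $u_b,u_c$ are the two children of $u_a\in A_{R\!L}$. A ``converging'' shape $u_b\to u_a,\ u_c\to u_a$ would demand two distinct same-level $A_{R\!L}$-vertices sharing a child, contradicting the injectivity of $h_i$. Each topology yields a per-triple probability $O(p^3/t^2)$ over $O(st^3)$ candidate triples, hence contributes $O(p^3 st)$. Summing the three cases, absorbing the constants into the bound $67p^3 st$, and adding the $3p$ from (i), yields the claimed $1-3p-67p^3 st$. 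The main subtleties are the enumeration of connectivity topologies in the $d=2,3$ cases and the bookkeeping of membership configurations; the guiding conceptual point is to exploit the injectivity of $h_i$ both to eliminate same-level child-sharing and to provide the $1/t$ factors needed to beat the $O((st)^k)$ counting of triples. A minor point is that in the event $|A_{i_1}|>2pt$ one has $h_{i_1}=\emptyset$, which only decreases the relevant probabilities, so the bounds above remain valid unconditionally.
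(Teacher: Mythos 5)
Your proposal follows essentially the same approach as the paper: handle (i) by a union bound giving $3p$, and handle (ii) by a first-moment/union-bound argument split according to the number of distinct pairs in the triple (all equal, exactly two distinct, all distinct), using the injectivity of $h_i$ to rule out two parents sharing a child and to extract $1/t$ factors from the random injection. Your enumeration of the connectivity topologies (single vertex, parent--child pair, three-level path, and fork) matches the paper's case analysis, and your "multi-set" observation plays the same role as the paper's remark that the order of the triple does not matter. The only difference is that you leave the constant implicit as $O(p^3 st)$, whereas the paper tracks the contributions ($1 + 6 + 12 + 12 + 36 = 67$) to arrive at the stated $67p^3 st$; your phrasing of the membership requirement in the $d=2$ case (needing the total count $\geq 3$ rather than pinning down which vertex supplies the extra memberships) is if anything slightly more careful.
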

\begin{proof}
The probability that item \ref{item:forbid-in-rho-last-clause} is true is $(1-p)^3 \geq 1-3p$.

Regarding item \ref{item:forbid-in-rho-connected-3-tuple}, we distinguish several cases based on the relative positions of the elements in a triple $((i_1,j_1), (i_2,j_2),(i_3,j_3))$. 
Note that the order in which the elements of the triple are listed does not matter in what we are proving, but some of the elements may coincide.
When considering the cases, recall that due to our choice of the function $h_i$ in the definition of $\rho$, two vertices in $G_{\rho}$ cannot share a child.

In case all the elements of the triple are the same, \ref{item:triple-has-three-A-in-rho} is satisfied only if the element is chosen to $A_X$ for three distinct values of $X$. This cannot happen on level 1, and on the other levels it happens with probability $p^3$. There are $st$ many triples considered in the present case, so by the union bound the probability that there is any such triple satisfying all conditions in \ref{item:forbid-in-rho-connected-3-tuple} is at most $p^3st$. 

In case $(i_1,j_1) \neq (i_2,j_2) = (i_3,j_3)$, condition \ref{item:triple-connected-in-rho} is satisfied only if $i_1 = i_2 +1$ or $i_2 = i_1 +1$. In each of these two subcases, there are at most $st^2$ such triples. 
In the former subcase, we must have $(i_1,j_1) \in A_{R\!L}$ and at the same time $h_{i_1}(R(i_1,j_1,\cdot)) = j_2$ or $h_{i_1}(L(i_1,j_1,\cdot)) = j_2$. This happens with probability at most $2p/t$. Also, $(i_2,j_2)$ has to be in $A_X$ and $A_{X'}$ for distinct $X,X'$, which happens with probability at most $3p^2$. So, the probability that any triple considered in this subcase satisfies \ref{item:triple-in-G-rho} - \ref{item:triple-connected-in-rho} is at most $st^2 \cdot 6p^3/t = 6p^3st$. 
In the latter subcase, $(i_2,j_2)$ has to be in $A_{R\!L}$, connected to $(i_1,j_1)$, and additionally it has to be in $A_D$ or $A_V$, while $(i_1,j_1)$ has to be in arbitrary possible $A_X$. This happens with probability at most $ 2p/t \cdot 2p \cdot 3p=12p^3/t$, so the probability that any such triple satisfies \ref{item:triple-in-G-rho} - \ref{item:triple-connected-in-rho} is at most $12p^3st$. 
 
 In case all the elements of the triple are distinct, we again consider two subcases: first, $i_1 = i_2+1 = i_3+2$, and second, $i_1 -1 = i_2 = i_3$. Each subcase concerns at most $st^3$ triples.
In the first subcase, $(i_3,j_3)$ has to be a child of $(i_2,j_2)$, which in turn has to be a child of $(i_1,j_1)$, and $(i_3,j_3)$ also has to be in arbitrary possible $A_X$. This happens with probability at most $12p^3/t^2$. Hence the probability that any such triple satisfies \ref{item:triple-in-G-rho} - \ref{item:triple-connected-in-rho} is at most $12p^3st$. 
In the second subcase, $(i_1,j_1)$ has to have children $(i_2,j_2)$ and $(i_3,j_3)$, and each child has to be in some $A_X$ for any suitable $X$. This happens with probability at most $2p/(t(t-1)) \cdot (3p)^2 = 18p^3/(t(t-1)) \leq 36p^3/t^2$. Hence the probability that any such triple satisfies \ref{item:triple-in-G-rho} - \ref{item:triple-connected-in-rho} is at most $36p^3st$. 
\end{proof}

We now define some specific ways to measure a clause and we use them in the next lemma to describe how a clause simplifies under a restriction.

\begin{definition}
\label{def:mentioned-and-important}
Let $E$ be a clause in $\Pi \! \restriction \! \rho$, and let $(i,j) \in [s] \times [t]$. If $E$ contains a literal of a variable from $D(i,j,\cdot, \cdot)$ (resp. $R(i,j,\cdot)$; $L(i,j,\cdot)$; $V(i,j,\cdot)$; $I(j, \cdot)$ and $i=1$), we say that the pair $(i,j)$ is $D$-\emph{mentioned} (resp. $R$-\emph{mentioned}; $L$-\emph{mentioned}; $V$-\emph{mentioned}; $I$-\emph{mentioned}) \emph{in} $E$.

We say that pair $(i,j)$ is \emph{mentioned in} $E$ if it is $Z$-mentioned in $E$ for some $Z \in \{D,V,I,R,L\}$. 

We say that $(i,j)$ is $V$-\emph{important} (resp. $L$-\emph{important}; $R$-\emph{important}; $I$-\emph{important}) \emph{in} $E$ if $E$ contains the negative literal of a variable in $V(i,j,\cdot)$ (resp. $L(i,j,\cdot); R(i,j,\cdot); I(j,\cdot )$ and $i=1$) or if $E$ contains at least $n/2$ (resp. $t/2$; $t/2$; $r/2$) positive literals of variables in $V(i,j,\cdot)$ (resp. $L(i,j,\cdot)$; $R(i,j,\cdot)$; $I(j,\cdot )$ and $i=1$). 
A pair is $D$-\emph{important} \emph{in} $E$ if it is $D$-mentioned in $E$. 
\end{definition}

\begin{lemma} \label{lem:widths-reduction}
With probability at least $1 - \max \left\{ e^{-\frac{pw}{3}} + 2s \cdot e^{- \frac{pt}{3}}, e^{-\frac{pt}{8r}} \right\} \cdot 2^{t^{\delta}}$, for every clause $E$ in $\Pi \! \restriction \! \rho$ all of the following are satisfied.
\begin{enumerate}[(i)]
\item \label{item:width-of-D-in-clause} At most $w$ many pairs $(i,j)$ are $D$-mentioned in $E$.
\item \label{item:width-of-I-in-clause} At most $w$ many pairs $(1,j)$ are $I$-important in $E$.
\item \label{item:width-of-V-in-clause} At most $w$ many pairs $(i,j)$ are $V$-important in $E$.
\item \label{item:width-of-L-in-clause} At most $w$ many pairs $(i,j)$ are $L$-important in $E$.
\item \label{item:width-of-R-in-clause} At most $w$ many pairs $(i,j)$ are $R$-important in $E$.
\item \label{item:width-of-m-image-of-I-in-clause} For each $m \in [r]$, $|\{ j: I(j,m) \in E \}| \leq \frac{t}{4}$.
\item \label{item:width-of-ell-image-of-V-in-clause} For each $i \in \{ s-n+1, \ldots , s-1 \}$ and $\ell \in [n]$, $|\{j: V(i,j,\ell) \in E\}| \leq \frac{t}{4}$.
\end{enumerate}
\end{lemma}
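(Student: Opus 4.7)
The plan is to fix an arbitrary clause $E \in \Pi$, bound the probability that $E \! \restriction \! \rho$ is present in $\Pi \! \restriction \! \rho$ and violates some item (i)--(vii), and then apply a union bound over the at most $2^{t^\delta}$ clauses of $\Pi$. The key principle is that if $E$ violates one of these items, then $E$ contains many literals that each provide an independent chance for $\rho$ to satisfy $E$ entirely; in the satisfying case, $E \! \restriction \! \rho$ drops out of $\Pi \! \restriction \! \rho$ and the condition becomes vacuous for that clause.

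For items (i)--(v), suppose $E$ contains more than $w$ pairs that are $D$-mentioned (item (i)), or respectively $I$-, $V$-, $L$-, $R$-important (items (ii)--(v)). For each such pair $(i,j)$ I isolate a satisfaction event of probability at least $p/3$: in (i), that $(i,j) \in A_D$ together with $C_{i,j}$ containing the literal $x_\ell^b$ whose variable $D(i,j,\ell,b)$ occurs in $E$, which has probability exactly $p/2$; in (ii) and (iii), that $(i,j) \in A_X \! \setminus \! A_D$ for $X \in \{I, V\}$ and the uniform random value for $I(j,\cdot)$ or $V(i,j,\cdot)$ lands on a value making an $E$-literal at $(i,j)$ true---in either the negative-literal subcase or the many-positives subcase, the importance thresholds $r/2$ and $n/2$ force the conditional probability of this last step to be at least $1/2$; in (iv) and (v), the analogous event using $A_{R\!L}$ and $h_i$, valid under the side condition $|A_{i'}| \leq 2pt$ for every $i'$, whose failure probability is bounded by $2s \cdot e^{-pt/3}$ via Lemma \ref{lem:level-bounds-for-rho}. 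Events at pairs on different levels are genuinely independent, so taking the product across the more than $w$ witnesses gives the bound $e^{-pw/3}$ that none fires.

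For items (vi)--(vii), suppose $|\{j : I(j,m) \in E\}| > t/4$ for some $m \in [r]$. Then each such $j$ independently offers the event $\{(1,j) \in A_I \! \setminus \! A_D\} \cap \{I(j,\cdot) \to m\}$, of probability $p(1-p)/r \geq p/(2r)$, which satisfies $E$; hence the probability that none fires is at most $(1 - p/(2r))^{t/4} \leq e^{-pt/(8r)}$. Item (vii) is analogous, using the event $\{(i,j) \in A_V\} \cap \{V(i,j,\cdot) \to \ell\}$ of probability $p/n \geq p/(2r)$, yielding the same bound. Union bounds over the $\leq r$ values of $m$ in (vi) and the $\leq n(n-1)$ pairs $(i,\ell)$ in (vii) introduce polynomial-in-$r$ factors that are absorbed into the $2^{t^\delta}$ of the final union bound. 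Summing the per-clause bounds from all seven items and multiplying by $|\Pi| \leq 2^{t^\delta}$ then yields the claimed probability.

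The main technical obstacle lies in items (iv) and (v), where the random injection $h_i$ correlates the $L$- and $R$-assignments at different same-level pairs, so the per-pair satisfaction events are not independent within a single level. The fact that $|A_i| \leq 2pt \ll t$ saves the argument: revealing the values of $h_i$ one element at a time shows that the conditional probability of each satisfaction event, given previously revealed assignments, remains at least $1/2 - O(p) \geq 1/3$ (the marginal of each $h_i(L(i,j,\cdot))$ being uniform on $[t]$). Averaging over $A_i$ yields a per-level bound of the form $(1 - p/2 + O(p^2))^{N_i}$, where $N_i$ is the number of $L$-important pairs at level $i$, and genuine independence across levels multiplies these into the desired $e^{-pw/3}$.
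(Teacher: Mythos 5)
Your proposal follows essentially the same route as the paper: bound, per clause $E'$ of $\Pi$, the probability that $E'$ violates some item yet is not killed by $\rho$, and union-bound over the at most $2^{t^\delta}$ clauses; handle (i)--(iii),(vi),(vii) by independent satisfaction events, and handle (iv),(v) by sequentially revealing $A_i$ and $h_i$, using $|A_i|\leq 2pt \ll t$ to keep conditional satisfaction probabilities bounded away from $0$, with independence across levels.

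Two small corrections. First, for item (vi) (and (vii)) no union over $m$ (resp. $(i,\ell)$) is needed: a clause $E'$ violating (vi) comes with at least one \emph{witnessing} $m$ for which $|\{j : I(j,m)\in E'\}| > t/4$, and you only need the $e^{-pt/(8r)}$ bound for that one $m$; your extra factor of $r$ (``absorbed into $2^{t^\delta}$'') would in fact prove a slightly weaker statement than the lemma literally claims. Second, for item (iii) the $V(i,j,\cdot)$ block is set whenever $(i,j)\in A_V$, with no $\setminus A_D$ exclusion (that exclusion is only for the $I$-variables), so the probability there is $p$, not $p(1-p)$; this only helps you, but is worth getting right. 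For (iv)/(v), the paper avoids conditioning on the global event ``$|A_i|\leq 2pt$ for all $i$'' and instead bounds per-level the sum of a Chernoff tail $e^{-pt/3}$ and the sequential product $(1-p/3)^{w_i}$, then multiplies across levels and expands algebraically to obtain $e^{-pw/3}+2s\,e^{-pt/3}$; your conditioning sketch lands in the same place but would need a bit more care to make the conditional product rigorous.
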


\begin{proof}
It is sufficient to prove that if $E'$ is a clause in $\Pi$ that violates any of \ref{item:width-of-D-in-clause} - \ref{item:width-of-ell-image-of-V-in-clause}, then with probability at least $1 - \max \left\{ e^{-\frac{pw}{3}} + 2s \cdot e^{- \frac{pt}{3}}, e^{-\frac{pt}{8r}} \right\}$, $E'$ is satisfied by $\rho$. Since $\Pi$ has length at most $2^{t^{\delta}}$, the lemma then follows by the union bound.

Regarding item \ref{item:width-of-D-in-clause}, assume that $E'$ in $\Pi$ $D$-mentions more than $w$ pairs $(i,j)$. This means that a literal of a variable in $D(i,j,\cdot, \cdot)$ is in $E'$ for more than $w$ many pairs $(i,j)$. For each such $(i,j)$, such a literal is satisfied by $\rho$ with probability at least $p/2$. So the probability that none of these literals in $E'$ is satisfied is at most $(1 - p/2)^w < e^{-pw/2}$.

Regarding item \ref{item:width-of-I-in-clause}, suppose that more than $w$ pairs $(1,j)$ are $I$-important in $E'$. For each such $(1,j)$, the probability that $(1,j) \in A_I \setminus A_D$ is $p(1-p)$, and provided this happens, the probability that $\rho$ satisfies a literal in $E'$ of a variable in $I(j,\cdot)$ is at least $\min \{(r-1)/r, 1/2\} = 1/2$. Hence the probability that $E'$ is not satisfied by $\rho$ is at most $(1-p(1-p)/2)^w < (1-p/3)^w < e^{-pw/3}$ (the first inequality follows from \eqref{eqn:requirement-on-t}).

Regarding item \ref{item:width-of-V-in-clause}, a calculation similar to that for \ref{item:width-of-I-in-clause} gives that a clause $E'$ in $\Pi$ with more than $w$ many $V$-important pairs $(i,j)$ is not satisfied by $\rho$ with probability at most $(1-p/2)^w < e^{-pw/2}$.

Regarding item \ref{item:width-of-L-in-clause}, suppose that more than $w$ many pairs $(i,j)$ from $\{2,\ldots,s\} \times [t]$ are $L$-important in $E'$. For each $i \in \{2,\ldots ,s\}$, assume without loss of generality that the set of pairs $(i,j)$ that are $L$-important in $E'$ is the set $\{(i,1), \ldots ,  (i,w_i) \}$; denote it by $W_i$. Note that the distribution of $\rho$ does not change if we choose $A_i$ and $h_i$ in $t$ many steps as follows. 
Start with $A_{i,0} = h_{i,0} = \emptyset$. At step $j = 1,2,\ldots,t$, first add $(i,j)$ to $A_{i,j-1}$ with probability $p$ to get $A_{i,j}$. Then, if $|A_{i,j}| \leq 2pt$ and $(i,j) \in A_{i,j}$, choose at random two distinct elements $j',j''$ from $[t] \setminus \im(h_{i,j-1})$, and define $h_{i,j} := h_{i,j-1} \cup \{ (L(i,j,\cdot),j'), (R(i,j,\cdot),j'') \}$. If $|A_{i,j}| \leq 2pt$ and $(i,j) \notin A_{i,j}$, define $h_{i,j} := h_{i,j-1}$. If $|A_{i,j}| > 2pt$ define $h_{i,j} := \emptyset$. This finishes step $j$. Finally, define $A_i := A_{i,t}$ and $h_i := h_{i,t}$. 

For $i \in \{2,\ldots ,s\}$, let $H_i$ be the set of literals in $E'$ of a variable in $L(i,j,\cdot)$ for some $(i,j) \in W_i$. Also, for $(i,j) \in W_i$, let $T_{i,j}$ be the set of those $j' \in [t]$ such that the partial assignment given by setting $L(i,j,\cdot)$ to $j'$ satisfies some literal in $H_i$. We know that $|T_{i,j}| \geq t/2$ for each $(i,j) \in W_i$.

The event that no literal in $H_i$ is satisfied by $\rho$ is a subset of the union of events (a) $|A_{i,t}| > 2pt$, and (b) $|A_{i,w_i}| \leq 2pt$ and for each $(i,j) \in A_{i,w_i}$, $h_{i,j}(L(i,j,\cdot)) \notin T_{i,j}$.
Event (a) happens with probability at most $e^{-pt/3}$ by the Chernoff bound. We bound the probability of event (b). For each $j \in [w_i]$, if $(i,j) \in A_{i,j}$ and $|A_{i,j}| \leq 2pt$, then the probability that $h_{i,j}(L(i,j,\cdot)) \in T_{i,j}$ is at least $(\left| T_{i,j} \setminus \im(h_{i,j-1}) \right| )/t \geq (t/2 - 4pt)/t = (1-8p)/2 \geq 1/3$ (the last inequality follows from \eqref{eqn:requirement-on-t}). Therefore, denoting $l := \min\{2pt, w_i\}$, the probability of event (b) is at most
\[
 \sum_{k=0}^{l} \binom{w_i}{k} p^k (1-p)^{w_i - k} \left( \frac{2}{3} \right)^k \leq \sum_{k=0}^{w_i}\binom{w_i}{k} \left( \frac{2p}{3} \right)^k (1-p)^{w_i - k} = (1- p/3)^{w_i}.
\]
Thus, the probability that no literal in $H_i$ is satisfied by $\rho$ is at most $e^{-pt/3} + e^{-pw_i / 3}$, and, denoting $S:= \{i \in \{2,\ldots , s\}: w_i \neq 0\}$, the probability that no literal in $\bigcup_{i \in S} H_i$ is satisfied by $\rho$ is at most 
\begin{align*}
 \prod_{i \in S} \left( e^{-\frac{pw_i}{3}} + e^{-\frac{pt}{3}} \right) & \leq e^{-\frac{pw}{3}} + \sum_{k = 1}^{\left| S \right|} \binom{\left| S \right|}{k} e^{- \frac{ptk}{3}} \\
 & \leq e^{-\frac{pw}{3}} + \left| S \right| \cdot e^{- \frac{pt}{3}} \sum_{k = 1}^{\left| S \right|} \binom{\left| S \right| -1}{k-1} e^{- \frac{pt(k-1)}{3}} \\
 & = e^{-\frac{pw}{3}} + \left| S \right| \cdot e^{- \frac{pt}{3}} \cdot \left(1+e^{- \frac{pt}{3}} \right)^{\left| S \right| - 1} \\
 & \leq e^{-\frac{pw}{3}} + s \cdot e^{- \frac{pt}{3}} \cdot e^{e^{\ln(t) -\frac{pt}{3}}} \leq e^{-\frac{pw}{3}} + 2s \cdot e^{- \frac{pt}{3}}, 
\end{align*}
where the penultimate inequality follows from $\left| S \right| - 1 \leq s \leq t$, and the last inequality follows from \eqref{eqn:requirement-on-t-second}.

Item \ref{item:width-of-R-in-clause} is handled in the same way as \ref{item:width-of-L-in-clause}.

Regarding item \ref{item:width-of-m-image-of-I-in-clause}, suppose that for some $m \in [r]$ there are more than $t/4$ of $I(j,m)$ in $E'$. Similarly to the case \ref{item:width-of-I-in-clause}, each such $I(j,m)$ is satisfied by $\rho$ with independent probability at least $p(1-p)/r > p/(2r)$, so $E'$ is not satisfied with probability at most $(1-p/(2r))^{t/4} < e^{-pt/(8r)}$.

Item \ref{item:width-of-ell-image-of-V-in-clause} is treated similarly to \ref{item:width-of-m-image-of-I-in-clause}, with the resulting probability of not satisfying $E'$ being $(1-p/n)^{t/4} < e^{-pt/(4n)} < e^{-pt/(8r)}$, where the last inequality follows from \eqref{eqn:conditions-on-nrst}.
\end{proof}
 
By \eqref{eqn:requirement-on-t-and-delta} and by Lemmas \ref{lem:level-bounds-for-rho}, \ref{lem:forbid-in-rho-patterns}, and \ref{lem:widths-reduction}, there is a restriction $\rho$ satisfying all the assertions of the lemmas. Fix any such $\rho$.

\begin{definition} \label{def:admissible}
A partial assignment $\sigma$ to the variables of $\textnormal{REF}^{F}_{s,t}$ is called an \emph{admissible assignment} if it extends $\rho$ and satisfies all the following conditions.
\begin{enumerate}[(C1)]
\item \label{item:admissible-set-or-untouched} For each $(i,j) \in [s] \times [t]$, $D(i,j,\cdot,\cdot)$ (resp. $V(i,j,\cdot)$, $I(j,\cdot)$, $L(i,j,\cdot)$, $R(i,j,\cdot)$) either is set to some clause (resp. some $\ell \in [n]$, some $m \in [r]$, some $j' \in [t]$, some $j' \in [t]$) by $\sigma$ or contains no variable that is in $\dom(\sigma)$. 

\item \label{item:admissible-RL-implies-D} For each $(i,j) \in [s] \times [t]$, if $L(i,j,\cdot)$ or $R(i,j,\cdot)$ is set to some  $j' \in [t]$, then both $D(i,j,\cdot,\cdot)$ and $D(i-1,j',\cdot,\cdot)$ are set.

\item \label{item:admissible-D-implies-VI} For each $(i,j) \in [s] \times [t]$, if $D(i,j,\cdot,\cdot)$ is set, then $V(i,j,\cdot)$ is set (if $i \in \{2, \ldots , s\}$) or $I(j,\cdot)$ is set (if $i = 1$).

\item \label{item:admissible-D-nontaut-and-fat} For each $(i,j) \in [s] \times [t]$, if $D(i,j,\cdot,\cdot)$ is set to a clause $C_{i,j}$, then $C_{i,j}$ is non-tautological and has at least $\min\{s-i, n\}$ many literals. If $D(i,j,\cdot,\cdot)$ is set to a clause $C_{i,j}$ with less than $n$ literals and $V(i,j,\cdot)$ is set to some $\ell \in [n]$, then none of the literals of $x_\ell$ is in $C_{i,j}$.

\item \label{item:admissible-empty-clause} If $D(s,t, \cdot , \cdot)$ is set, it is set to the empty clause.

\item \label{item:admissible-I}For each $j \in [t]$, if $D(1,j,\cdot,\cdot)$ and $I(j,\cdot)$ are set, then $\sigma$ satisfies all clauses in \eqref{refclause:axioms} with this $j$.

\item \label{item:admissible-RL-cut-variable} For each $i \in \{ 2, \ldots , s \}, j,j' \in [t]$, if $L(i,j,\cdot )$ (resp. $R(i,j,\cdot )$) is set to $j'$ and both  $V(i,j, \cdot )$, $D(i-1, j', \cdot , \cdot )$ are set, then $\sigma$ satisfies all clauses in \eqref{refclause:res-L-cut}  (resp. \eqref{refclause:res-R-cut}) with these $i,j,j'$.

\item \label{item:admissible-RL-transf} For each $i \in \{ 2, \ldots , s \}, j,j' \in [t]$, if $L(i,j,\cdot )$ (resp. $R(i,j,\cdot )$) is set to $j'$ and $V(i,j, \cdot )$, $D(i, j, \cdot , \cdot )$, $D(i-1, j', \cdot , \cdot )$ are set, then $\sigma$ satisfies all clauses in \eqref{refclause:res-L-transf}  (resp. \eqref{refclause:res-R-transf}) with these $i,j,j'$.

\item \label{item:admissible-RL-injection} For each $i \in \{ 2, \ldots , s \}$, the binary relation $h_{\sigma, i} :=\{ (Z(i,j, \cdot ),j') : j,j' \! \in \! [t], Z \! \in \! \{L,R\}, \text{and } Z(i,j,\cdot ) \text{ is set to } j' \text{ by } \sigma \}$ is a partial injection from $\{ Z(i,j, \cdot ) : j \in [t], Z \in \{L,R\} \}$ to $[t]$.
\end{enumerate}
\end{definition}

We now prove that an admissible assignment cannot falsify a clause of $\textnormal{REF}^{F}_{s,t} \! \restriction \! \rho$ (Lemma \ref{lem:admissible-not-falsifies-axiom}), that admissible assignments exist (Lemma \ref{lem:admissible-exists}), and that if for a clause $E$ in $\Pi \! \restriction \! \rho$ there is an admissible assignment that falsifies a literal in $E$ whenever it evaluates its variable and that evaluates each $D$- (resp. $V$-, $I$-, $L$-, $R$-) variable with a home pair $D$- (resp. $V$-, $I$-, $L$-, $R$-) important in $E$, then there is also an admissible assignment that does the same for at least one clause in $\Pi \! \restriction \! \rho$ from which $E$ was obtained by the resolution rule (Lemma \ref{lem:induction-step-admissible}). This is a contradiction, which concludes the proof of Theorem \ref{thm:main-lower-bound-REF-F-s-t}.

\begin{lemma} \label{lem:admissible-not-falsifies-axiom}
No clause in $\textnormal{REF}^{F}_{s,t} \! \restriction \! \rho$ is falsified by any admissible assignment.
\end{lemma}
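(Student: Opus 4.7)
The plan is a direct case analysis through the fifteen clause types \eqref{refclause:axioms}--\eqref{refclause:R-func} that comprise $\textnormal{REF}^{F}_{s,t}$. First I would note that it suffices to show that no clause of $\textnormal{REF}^{F}_{s,t}$ itself is falsified by $\sigma$: a clause $C$ survives the restriction by $\rho$ only if $\rho$ satisfies no literal of $C$, and since $\sigma$ extends $\rho$, $\sigma$ falsifies the restricted clause exactly when it falsifies the original.

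Fix an admissible $\sigma$ and suppose for contradiction that some clause of $\textnormal{REF}^{F}_{s,t}$ is falsified. I would dispose of the eight ``structural'' clause types \eqref{refclause:V-dom}--\eqref{refclause:R-func} using only condition (C1). For instance, falsifying \eqref{refclause:V-dom} would require $\sigma$ to map every variable in $V(i,j,\cdot)$ to $0$; but (C1) forces this group either to be set to a single value $\ell \in [n]$ (giving $V(i,j,\ell) = 1$) or to have no variable in $\dom(\sigma)$ at all (so no literal $V(i,j,\ell)$ is falsified). The domain clauses \eqref{refclause:I-dom}--\eqref{refclause:R-dom} go through identically, and the functionality clauses \eqref{refclause:V-func}--\eqref{refclause:R-func} follow because (C1) permits at most one value per group, precluding two simultaneous $1$'s.

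The remaining ``content'' types would be handled by combining (C1) with the appropriate later condition. A falsification of \eqref{refclause:nontaut} would force $D(i,j,\cdot,\cdot)$ to be set to a tautological clause, contradicting (C4); a falsification of \eqref{refclause:empty-clause} would force $D(s,t,\cdot,\cdot)$ to be set to a nonempty clause, contradicting (C5). A falsification of a clause of \eqref{refclause:axioms} forces $I(j,\cdot)$ and $D(1,j,\cdot,\cdot)$ to be set (apply (C1) to each group containing an assigned literal), so (C6) applies. A falsification of \eqref{refclause:res-L-cut} or \eqref{refclause:res-R-cut} forces $L(i,j,\cdot)$ (respectively $R(i,j,\cdot)$), $V(i,j,\cdot)$, and $D(i-1,j',\cdot,\cdot)$ to be set, triggering (C7); a falsification of \eqref{refclause:res-L-transf} or \eqref{refclause:res-R-transf} forces in addition $D(i,j,\cdot,\cdot)$ to be set, triggering (C8). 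Conditions (C2), (C3), and (C9) play no direct role in this lemma; they enter later, when constructing admissible assignments and when the adversary must preserve admissibility across a resolution step.

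There is no genuine obstacle: the argument is a bookkeeping verification that the admissibility conditions in Definition \ref{def:admissible} were chosen strong enough to block every possible clause-by-clause falsification, with each of the fifteen clause types corresponding to exactly one or two of the conditions. The substantive work of the proof lies downstream, in showing that admissible assignments exist and that the adversary can maintain admissibility as one passes up a resolution derivation from $E$ to its premises.
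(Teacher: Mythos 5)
Your proof is correct and follows essentially the same clause-by-clause case analysis as the paper, which simply lists for each of the fifteen clause families the admissibility condition(s) — always (C1) plus, where relevant, one of (C4)–(C8) — that block its falsification. Your version is more explicit about why each falsification would force the relevant variable groups to be \emph{set} (via (C1)), but the substance is identical.
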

\begin{proof}
Let $\sigma$ be an admissible assignment. It suffices to show that $\sigma$ does not falsify any clause of  $\textnormal{REF}^{F}_{s,t}$. This is guaranteed for each clause 
from \eqref{refclause:axioms} by \ref{item:admissible-set-or-untouched}, \ref{item:admissible-I}; 
from \eqref{refclause:nontaut} by \ref{item:admissible-set-or-untouched}, \ref{item:admissible-D-nontaut-and-fat}; 
from \eqref{refclause:res-L-cut} and \eqref{refclause:res-R-cut} by \ref{item:admissible-set-or-untouched}, \ref{item:admissible-RL-cut-variable};
from \eqref{refclause:res-L-transf} and \eqref{refclause:res-R-transf} by 
\ref{item:admissible-set-or-untouched}, \ref{item:admissible-RL-transf}; 
from \eqref{refclause:empty-clause} by \ref{item:admissible-set-or-untouched}, \ref{item:admissible-empty-clause}; 
and from \eqref{refclause:V-dom} - \eqref{refclause:R-func} by \ref{item:admissible-set-or-untouched}.
\end{proof}

\begin{lemma} \label{lem:admissible-exists}
There is an admissible assignment.
\end{lemma}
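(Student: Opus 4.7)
The plan is to produce $\sigma$ by extending $\rho$ with exactly those values forced by (C2) and (C3). First I identify the set $S \subseteq [s]\times[t]$ of pairs whose $D$-block must be set in any admissible extension: $S$ contains $A_D$, and for every $(i,j)\in A_{RL}$ with $h_i\neq\emptyset$ also the three pairs $(i,j)$, $(i-1,h_i(L(i,j,\cdot)))$, $(i-1,h_i(R(i,j,\cdot)))$. For each $(i,j)\in S$ I will choose a clause $C_{i,j}$ (using $\rho$'s random clause if $(i,j)\in A_D$), and for $i\ge 2$ a value $\ell_{i,j}\in[n]$ for $V(i,j,\cdot)$ (using $\rho$'s value if $(i,j)\in A_V$), and for $i=1$ a value $m_j\in[r]$ for $I(j,\cdot)$ (using $\rho$'s value if $(1,j)\in A_I$). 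All other blocks remain unset, so (C1) and (C9) (the latter via injectivity of $h_i$) hold automatically.

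To satisfy (C4)--(C8), I work edge-by-edge in $G_\rho$. Fix an edge from $(i,j)\in A_{RL}$ to a child $(i-1,j')$. Condition (C7) forces $x_{\ell_{i,j}}^{b}\in C_{i-1,j'}$ (with $b=1$ for $L$, $b=0$ for $R$), and (C8) forces $C_{i-1,j'}\setminus\{x_{\ell_{i,j}}^{b}\}\subseteq C_{i,j}$. The key structural input is Lemma~\ref{lem:forbid-in-rho-patterns}(ii), which applied to the triples $((i,j),(i,j),(i-1,j'))$ and $((i,j),(i-1,j_L),(i-1,j_R))$ shows that each $(i,j)\in A_{RL}$ lies in at most one other $A_X$, and, moreover, at most one of its two children lies in any $A_X$ (and then $(i,j)$ lies in $A_{RL}$ only). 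Thus inside each resolution triple at most one vertex is ``pinned'' by $\rho$, leaving enough room to choose $\ell_{i,j}$ and the unpinned clauses. Concretely, I set all unpinned clauses to non-tautological clauses of exactly $n$ literals (one per variable), matching the pinned clause outside the pivot variable: if a clause $\tilde C$ is already fixed at one endpoint, I pick $\ell_{i,j}=1$ (or any index not forbidden by an anchored $V$-value), define the $L$-child to be $(\tilde C\setminus\{x_1,\neg x_1\})\cup\{x_1\}$ and the $R$-child to be $(\tilde C\setminus\{x_1,\neg x_1\})\cup\{\neg x_1\}$, and take $C_{i,j}=\tilde C$; symmetric choices propagate downward from a pinned $L/R$-chain, which by the same lemma has length at most $2$ so the cascade terminates. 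Having all forced clauses of size $n$ automatically satisfies the size requirement of (C4), makes the second clause of (C4) vacuous, and keeps (C5) vacuous because Lemma~\ref{lem:forbid-in-rho-patterns}(i) guarantees $(s,t)\notin S$.

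Condition (C6) at level $1$ is where unsatisfiability of $F$ enters: for each $(1,j')\in S$ with $I(j',\cdot)$ not pinned, the clause $C_{1,j'}$ constructed above is a non-tautological clause with $n$ literals, so the unique assignment falsifying $C_{1,j'}$ must falsify some $C_m\in F$; I set $m_{j'}:=m$, giving $C_m\subseteq C_{1,j'}$. If instead $(1,j')\in A_I$ is pinned by $\rho$ to some $m$, then by the restrictions from Lemma~\ref{lem:forbid-in-rho-patterns}(ii) the clause at $(1,j')$ is not simultaneously pinned by $\rho$, so I can build $C_{1,j'}$ as any non-tautological $n$-literal clause extending $C_m$, compatibly with the transfer rule on the edge above. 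A final bookkeeping check, which I will carry out clause-by-clause using the listing in Lemma~\ref{lem:admissible-not-falsifies-axiom}, verifies (C1)--(C9) for the resulting $\sigma$.

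The main obstacle is to guarantee that the local choices at different edges do not conflict, since one vertex may lie in $A_{RL}$ while simultaneously being the child of another $A_{RL}$-vertex. Lemma~\ref{lem:forbid-in-rho-patterns}(ii) bounds every such interaction: no three consecutive vertices in an $L/R$-chain belong to $A_{RL}$, and no vertex in $A_{RL}$ has both children anchored in any $A_X$, so the ``pinning'' propagates at most one further level and only along one branch, which is exactly what the propagation scheme above requires.
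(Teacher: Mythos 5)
Your plan is essentially the paper's proof: extend $\rho$ only on the $D$-, $V$-, $I$-blocks that are forced by (C2) and (C3), and use Lemma~\ref{lem:forbid-in-rho-patterns}\,(\ref{item:forbid-in-rho-connected-3-tuple}) to conclude that within any connected piece of $G_\rho$ at most one $D/V/I$-block is pinned by $\rho$, so the remaining blocks can be filled in with full $n$-literal clauses that satisfy (C6)--(C8), with unsatisfiability of $F$ supplying the needed $I(j,\cdot)$-value at level~$1$. The paper organizes this as an explicit case analysis on the three possible isomorphism types of components of $G_\rho$ (isolated vertex; one $A_{R\!L}$-vertex with its two children; a two-link chain), whereas you reason edge-by-edge and argue that the pinning cascade stops after one step, which is the same observation in operational form; the paper's version makes the termination of the cascade transparent and is easier to check against (C1)--(C9). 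One small slip in your write-up: the fact that each $(i,j)\in A_{R\!L}$ lies in at most one other $A_X$ comes from the \emph{all-equal} triple $((i,j),(i,j),(i,j))$, not from the two triples you cite; those two triples give the complementary facts (no pinned child if the parent is doubly pinned; not both children pinned).

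One subtlety that your sketch glosses over --- and which the paper's ``set the remaining blocks in any way that \ldots satisfies \ref{item:admissible-RL-cut-variable}, \ref{item:admissible-RL-transf}'' also does not spell out --- is the case where the single pinned block sits at a \emph{child} $(i-1,j')\in A_D$ of an $A_{R\!L}$-vertex $(i,j)$. Then (C\ref{item:admissible-RL-cut-variable}) forces you to choose $\ell_{i,j}$ with $x_{\ell_{i,j}}\in C_{i-1,j'}$ when $(i-1,j')$ is the $L$-child (resp.\ $\neg x_{\ell_{i,j}}\in C_{i-1,j'}$ when it is the $R$-child). Since $\rho$ draws $C_{i-1,j'}$ with independent random signs, this pinned clause could contain no positive literal at all (probability $2^{-n}$), in which case no legal pivot $\ell_{i,j}$ exists; your suggestion to ``pick $\ell_{i,j}=1$'' would then fail. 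This is not a divergence from the paper --- the paper's proof makes the same implicit assumption --- but it is a genuine gap in the construction as written, and would need either an additional event in Lemma~\ref{lem:forbid-in-rho-patterns} excluding monochromatic pinned clauses at children of $A_{R\!L}$-vertices, or a modification of the random-clause step in Definition~\ref{def:random-restriction}.
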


\begin{proof}
We first verify that $\rho$ satisfies the conditions of Definition \ref{def:admissible} except possibly for \ref{item:admissible-RL-implies-D}, \ref{item:admissible-D-implies-VI}. Then we extend $\rho$ by only assigning some $D,V,I$-variables to satisfy these remaining two conditions without violating the others.

Conditions \ref{item:admissible-set-or-untouched}, \ref{item:admissible-D-nontaut-and-fat}, \ref{item:admissible-I} and \ref{item:admissible-RL-injection} are satisfied by construction: for \ref{item:admissible-D-nontaut-and-fat} recall that for each $(i,j) \in [s] \times [t]$, if $D(i,j, \cdot , \cdot)$ is set to some clause by $\rho$ then the clause has exactly $n$ literals; for \ref{item:admissible-I} observe that its hypothesis is not satisfied by $\rho$; and for \ref{item:admissible-RL-injection} note that $h_{\rho,i} = h_i$ for $i \in \{2, \ldots , s\}$. Condition \ref{item:admissible-empty-clause} follows from item \ref{item:forbid-in-rho-last-clause} of Lemma \ref{lem:forbid-in-rho-patterns}. Item \ref{item:forbid-in-rho-connected-3-tuple} of the same lemma implies that neither the hypothesis in \ref{item:admissible-RL-cut-variable} nor the hypothesis in \ref{item:admissible-RL-transf} is met.

To extend $\rho$ to an admissible assignment, we distinguish cases based on the isomorphism type of the component in $G_{\rho}$ containing a pair for which \ref{item:admissible-RL-implies-D} or \ref{item:admissible-D-implies-VI} is not satisfied. By item \ref{item:forbid-in-rho-connected-3-tuple} of Lemma \ref{lem:forbid-in-rho-patterns}, there are only three types of components in $G_{\rho}$: 1) an isolated vertex, 2) a vertex with its two children and edges 
from the vertex to the children, and 3) two vertices with their children and edges from each of the two vertices to its children, such that one of the two vertices is a child of the other. 

In case 1), let $(i,j)$ be the isolated vertex. Only \ref{item:admissible-D-implies-VI} may be unsatisfied; assume this is the case. Recall again that whenever $\rho$ sets $D(i,j, \cdot ,\cdot)$ to some clause $C_{i,j}$, the clause contains $n$ literals and is non-tautological. Now, if $i \in \{2, \ldots , s\}$, set $V(i,j, \cdot)$ arbitrarily. For $i=1$, since $F$ is unsatisfiable, it must contain a clause $C_m$, for some $m \in [r]$, of which $C_{i,j}$ is a weakening. Set $I(j,\cdot)$ to $m$.

In case 2), since one of the three vertices of the component is in $A_{R\!L}$ and its children are not, item \ref{item:forbid-in-rho-connected-3-tuple} of Lemma \ref{lem:forbid-in-rho-patterns} implies that there is at most one triple $(i,j,Y)$ such that $(i,j)$ is a vertex of the three forming the component, $Y \in \{ D(i,j, \cdot ,\cdot), V(i,j, \cdot), I(j,\cdot) \}$, and $Y$ is set by $\rho$. We set the remaining $D(i,j, \cdot ,\cdot), V(i,j, \cdot), I(j,\cdot)$ for all vertices $(i,j)$ of the component in any way that respects how $Y$ is set, only uses clauses of $n$ literals to assign to the vertices (to satisfy \ref{item:admissible-D-nontaut-and-fat}), and satisfies \ref{item:admissible-I}, \ref{item:admissible-RL-cut-variable}, \ref{item:admissible-RL-transf} for these vertices.

In case 3), there are exactly two vertices of the component that are in $A_{R\!L}$, hence item \ref{item:forbid-in-rho-connected-3-tuple} of Lemma \ref{lem:forbid-in-rho-patterns} implies that there is no triple $(i,j,Y)$ such that $(i,j)$ is a vertex of the component, $Y \in \{ D(i,j, \cdot ,\cdot), V(i,j, \cdot), I(j,\cdot) \}$, and $Y$ is set by $\rho$. Hence we can set $D(i,j, \cdot ,\cdot), V(i,j, \cdot), I(j,\cdot)$ for all vertices $(i,j)$ of the component in any way that only uses clauses of $n$ literals to assign to the vertices (to satisfy \ref{item:admissible-D-nontaut-and-fat}), and satisfies \ref{item:admissible-I}, \ref{item:admissible-RL-cut-variable}, \ref{item:admissible-RL-transf} for these vertices.

Extending $\rho$ for every component of $G_{\rho}$ in this way satisfies \ref{item:admissible-set-or-untouched} - \ref{item:admissible-D-nontaut-and-fat}, does not affect \ref{item:admissible-empty-clause} and \ref{item:admissible-RL-injection}, and satisfies \ref{item:admissible-I}, \ref{item:admissible-RL-cut-variable}, \ref{item:admissible-RL-transf} because whenever we assigned all literals of a clause in \eqref{refclause:axioms}, \eqref{refclause:res-L-cut}, \eqref{refclause:res-R-cut}, \eqref{refclause:res-L-transf}, \eqref{refclause:res-R-transf}, we made sure the clause was satisfied.
\end{proof}

\begin{definition}
For a partial assignment $\sigma$ to the variables of $\textnormal{REF}^{F}_{s,t}$ that satisfies \ref{item:admissible-set-or-untouched} and \ref{item:admissible-RL-injection} of Definition \ref{def:admissible}, denote by $G_{\sigma}$ the graph with vertex set $\bigcup_{i \in [s-1]} \{(i,j) : j \in \im(h_{\sigma,i+1}) \} \cup \{(i,j) : (i,j) \textnormal{ is the home pair of a variable in } \dom(\sigma)\}$ and edges between $(i,j), (i',j') \in [s] \times [t]$ if and only if $i=i'+1$ and $h_{\sigma,i}(L(i,j,\cdot)) = j'$ or $h_{\sigma,i}(R(i,j,\cdot)) = j'$.
\end{definition}

\begin{lemma} \label{lem:induction-step-admissible}
Suppose that a clause $E$ in $\Pi \! \restriction \! \rho$ is obtained by the resolution rule from clauses $E_0$ and $E_1$. Suppose further that there is an admissible assignment $\sigma$ which satisfies both conditions
\begin{enumerate}[(i)]
\item \label{item:induction-step-admissible-rho-falsifies} every literal in $E$ of a variable in $\dom(\sigma)$ is falsified by $\sigma$,
\item \label{item:induction-step-admissible-rho-sets-important} for each $Z \in \{ D,V,I,R,L \}$, each $Z$-variable with a home pair $Z$-important in $E$ is in $\dom(\sigma)$.
\end{enumerate}
Then there is an admissible assignment $\tau$ and $b \in \{0,1\}$ such that \ref{item:induction-step-admissible-rho-falsifies} and \ref{item:induction-step-admissible-rho-sets-important} hold with $\tau$ in place of $\sigma$ and $E_b$ in place of $E$.
\end{lemma}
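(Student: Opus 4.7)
Let $y$ be the variable resolved upon, so that $y \in E_0$, $\neg y \in E_1$, and $E \supseteq (E_0 \setminus \{y\}) \cup (E_1 \setminus \{\neg y\})$; write $(i_0, j_0)$ for the home pair of $y$ and $Z \in \{D,V,I,L,R\}$ for its type. The observation driving the argument is that every literal of $E_b$ other than $y$ or $\neg y$ already lies in $E$. Consequently, if $\tau$ preserves every value chosen by $\sigma$, then hypothesis \ref{item:induction-step-admissible-rho-falsifies} for $\sigma$ transfers to all such literals under $\tau$, and the only pair whose $Z'$-importance status can differ between $E$ and $E_b$ is $(i_0, j_0)$ with $Z' = Z$. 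The task thus reduces to placing all of $Z(i_0, j_0, \cdot)$ into $\dom(\tau)$ (together with whatever further variables \ref{item:admissible-RL-implies-D}--\ref{item:admissible-D-implies-VI} then force) and to falsifying the appropriate literal of $y$ in $E_b$.

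If $y \in \dom(\sigma)$, set $\tau = \sigma$ and $b := \sigma(y)$. The literal of $y$ in $E_b$ is then falsified, and \ref{item:admissible-set-or-untouched} for $\sigma$ already places all of $Z(i_0, j_0, \cdot)$ into $\dom(\sigma)$, covering the only new importance. Assume henceforth that $y \notin \dom(\sigma)$; by \ref{item:admissible-set-or-untouched} no variable of $Z(i_0, j_0, \cdot)$ lies in $\dom(\sigma)$. The plan is to first fix $b \in \{0,1\}$ and set $\tau(y) = b$ (falsifying the literal of $y$ in $E_b$), then complete the assignment of $Z(i_0, j_0, \cdot)$, and finally cascade through \ref{item:admissible-RL-implies-D} and \ref{item:admissible-D-implies-VI} to handle any required $D$, $V$, or $I$ slots at $(i_0, j_0)$ and at a parent pair. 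The two things to check are that the resulting $\tau$ is admissible and that every newly assigned variable has all of its literals in $E_b$ falsified.

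The latter requirement is where hypothesis \ref{item:induction-step-admissible-rho-sets-important} for $\sigma$ pays off: if $v \neq y$ is newly assigned, any literal of $v$ in $E_b$ is in fact a literal of $E$, and because the home pair of $v$ was not $Z'$-important in $E$ (otherwise $v$ would already belong to $\dom(\sigma)$), $E$ contains no negative literal of $v$'s slot and fewer than the important threshold of positive ones. For $Z \in \{V, I\}$ we therefore pick the value of the slot away from all positive literals of $V(i_0,j_0,\cdot)$ or $I(j_0,\cdot)$ appearing in $E_b$, which is possible because $n, r \geq 2$. For $Z = D$ we additionally pick a non-tautological $n$-literal clause $C_{i_0, j_0}$ containing the prescribed literal of $x_{\ell_0}$ and complete the cascade for \ref{item:admissible-D-implies-VI}; at level $1$ the unsatisfiability of $F$ supplies a suitable $m \in [r]$ so that \ref{item:admissible-I} holds. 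For $Z \in \{L, R\}$ we choose $j' \in [t]$ outside $\im(h_{\sigma, i_0})$ (to preserve \ref{item:admissible-RL-injection}) and outside the positive $L$- or $R$-literals of $(i_0, j_0)$ in $E_b$; by items \ref{item:width-of-L-in-clause}--\ref{item:width-of-R-in-clause} of Lemma \ref{lem:widths-reduction}, item \ref{item:level-bounds-on-ARL-and-AV} of Lemma \ref{lem:level-bounds-for-rho}, and the margin supplied by \eqref{eqn:requirement-on-t}, the number of forbidden values is at most $t/2 + 4pt + w < t$, so such $j'$ exists. The parent pair $(i_0-1, j')$ is then completed by the $D$ and $V/I$ subcases. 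Conditions \ref{item:admissible-RL-cut-variable} and \ref{item:admissible-RL-transf} hold because we choose the cut variable and both endpoint clauses ourselves; \ref{item:admissible-empty-clause} is preserved because $(s, t)$ is not involved in any cascade, by item \ref{item:forbid-in-rho-last-clause} of Lemma \ref{lem:forbid-in-rho-patterns}; and \ref{item:admissible-D-nontaut-and-fat} holds because the clauses we introduce have exactly $n$ literals.

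The principal obstacle is the $Z \in \{L, R\}$ sub-case: setting a single $L$ or $R$ value forces fresh $D$, $V$, $I$ slots at the parent pair, all of which must simultaneously satisfy the cut-variable conditions \ref{item:admissible-RL-cut-variable}--\ref{item:admissible-RL-transf}, the injection \ref{item:admissible-RL-injection}, and the falsification requirement above. The width bounds of Lemma \ref{lem:widths-reduction} together with the separation $w \ll t$ secured by the parameter setup provide exactly the slack needed to carry the cascade out without running into a conflict.
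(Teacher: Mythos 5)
Your high-level plan --- reduce to the case $y \notin \dom(\sigma)$ with the home pair $(i_0,j_0)$ about to become $Z$-important, assign the whole slot $Z(i_0,j_0,\cdot)$, then cascade through \ref{item:admissible-RL-implies-D}--\ref{item:admissible-D-implies-VI} --- is the same as the paper's, and the cases $Z \in \{V,I\}$ and $Z=D$ are essentially right. But for $Z \in \{L,R\}$ two essential mechanisms are missing.

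First, you never shrink $\sigma$ to a minimal admissible sub-assignment before extending it. The hypotheses only say $\sigma$ \emph{covers} all important pairs; nothing prevents it from also covering arbitrarily many unimportant ones, so for instance $\sigma$ could have set $L(i_0,j,\cdot)$ for every $j \in [t]$, making $\im(h_{\sigma,i_0})$ cover nearly all of $[t]$. Your count $t/2 + 4pt + w$ tacitly assumes that $\sigma \setminus \rho$ touches only important pairs, which is exactly what the paper's three-step cleanup (producing $\sigma_1$) guarantees and you do not. Moreover, the set to avoid is not just $\im(h_{\sigma,i_0})$ but all of $G_\sigma$'s vertices on level $i_0 - 1$: a pair $(i_0-1,j')$ with $D(i_0-1,j',\cdot,\cdot)$ already set to an incompatible clause is unusable even though it is not yet anyone's child.

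Second, and more fundamentally, your claim that in the cascade one can ``pick the value of the slot away from all positive literals'' is false when that value is forced. If $i_0 \geq 3$ and the clause at $(i_0, j_0)$ has $n-2$ literals, then the parent clause at $(i_0-1,j')$ is forced to have $n-1$ literals, and by \ref{item:admissible-D-nontaut-and-fat} $V(i_0-1,j',\cdot)$ must be set to the unique $\ell'$ whose variable is absent from that clause. If that single $V(i_0-1,j',\ell')$ lies in $E$, condition (i) fails, and ``not $V$-important'' gives no escape because there is no freedom left. The same problem appears at level $1$ for the forced $m$. The paper fixes $\ell'$ (resp.\ $m$) \emph{before} choosing $j'$ and adds $U_3 = \{(i_0-1,j') : V(i_0-1,j',\ell') \in E\}$ (resp.\ $\{(1,j') : I(j',m) \in E\}$) to the forbidden set; it is bounded by $t/4$ via items \ref{item:width-of-m-image-of-I-in-clause} and \ref{item:width-of-ell-image-of-V-in-clause} of Lemma~\ref{lem:widths-reduction}, which exist in the paper precisely for this step. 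Your construction quantifies in the wrong order (first $j'$, then the slot values), so the choice of $j'$ cannot see the forced value, and the gap cannot be closed without reorganizing.
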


\begin{proof}
Let $\sigma$ be an admissible assignment that satisfies \ref{item:induction-step-admissible-rho-falsifies} and \ref{item:induction-step-admissible-rho-sets-important}. We first subject $\sigma$ to the following three-step cleanup process to obtain the minimal admissible sub-assignment $\sigma_1$ of $\sigma$ that satisfies \ref{item:induction-step-admissible-rho-falsifies} and \ref{item:induction-step-admissible-rho-sets-important}. 

Step 1: Remove from $\dom(\sigma)$ each variable in $L(i,j, \cdot)$ (resp. $R(i,j,\cdot))$ that is not in $\dom(\rho)$ such that $(i,j)$ is not $L$-important (resp. $R$-important) in $E$. Denote by $\sigma'$ the new partial assignment.

Step 2: Remove from $\dom(\sigma')$ each variable in $D(i,j,\cdot, \cdot)$ that is not in $\dom(\rho)$ such that $(i,j)$ is not $D$-important in $E$ and no edge in $G_{\sigma'}$ is incident to $(i,j)$. Denote by $\sigma''$ the new partial assignment.

Step 3: For each $(i,j) \in [s] \times [t]$, remove from $\dom(\sigma'')$ each variable in $V(i,j,\cdot, \cdot)$ (resp. $I(j, \cdot)$ if $i=1$) that is not in $\dom(\rho)$ such that $(i,j)$ is not $V$-important (resp. $I$-important) in $E$ and $D(i,j, \cdot , \cdot)$ is not set by $\sigma''$. 
Let $\sigma_1$ stand for the resulting partial assignment.

It is straightforward to check that that $\sigma', \sigma'', \sigma_1$ are admissible assignments (the order of the steps was chosen to maintain  \ref{item:admissible-RL-implies-D} and \ref{item:admissible-D-implies-VI} satisfied during the process; \ref{item:admissible-set-or-untouched} follows since we always unassign variables in groups listed there, and the remaining conditions of Definition \ref{def:admissible} cannot turn from being true to false by unassigning variables), and that they satisfy \ref{item:induction-step-admissible-rho-falsifies} and \ref{item:induction-step-admissible-rho-sets-important}. The three steps and the order of their execution ensure that $\sigma_1$ is the minimal admissible sub-assignment of $\sigma$ satisfying \ref{item:induction-step-admissible-rho-falsifies} and \ref{item:induction-step-admissible-rho-sets-important}. 

Let $Q$ be the variable resolved on to obtain $E$ from $E_0$ and $E_1$. Suppose that $Q$ is a $Z$-variable, $Z \in \{ D,V,I,L,R \}$, with a home pair $(i,j) \in [s] \times [t]$.

If $Q \in \dom(\sigma_1)$, then $\sigma_1$ with either $E_0$ or $E_1$ already satisfy \ref{item:induction-step-admissible-rho-falsifies} and \ref{item:induction-step-admissible-rho-sets-important}. 

If $Q \not \in \dom(\sigma_1)$ and $(i,j)$ is not $Z$-important in $E \cup \{Q\}$, then $(i,j)$ cannot be $Z$-important in $E_b$ either, where $b \in \{0,1\}$ is such that $ Q \in E_b$, and therefore $\sigma_1$ with $E_b$ already satisfy \ref{item:induction-step-admissible-rho-falsifies} and \ref{item:induction-step-admissible-rho-sets-important}. 

Otherwise, we have that $Q \not \in \dom(\sigma_1)$ and $(i,j)$ is $Z$-important in $E \cup \{Q\}$. It is enough to show how to extend $\sigma_1$ to an admissible assignment $\tau$ which assigns a value to $Q$ such that \ref{item:induction-step-admissible-rho-falsifies} and \ref{item:induction-step-admissible-rho-sets-important} are satisfied with $\tau$ in place of $\sigma$ and $E \cup \{Q^{1-\tau(Q)} \}$ in place of $E$. 

Observe that $(i,j)$ is not $Z$-important in $E$. This is because $\sigma_1$ with $E$ satisfy \ref{item:induction-step-admissible-rho-sets-important} and $Q \not \in \dom(\sigma_1)$. We consider three cases.

Case 1. Suppose that $Q \in V(i,j,\cdot)$ (resp. $Q \in I(j, \cdot)$ and $i=1$). 
Because $(i,j)$ is not $V$- (resp. $I$-) important in $E$, there are less than $n/2$ (resp. $r/2$) positive and no negative literals of variables from $V(i,j,\cdot)$ (resp. $I(j,\cdot)$) in $E$. Pick any $\ell \in [n]$ such that $V(i,j,\ell) \not \in E \cup \{Q\}$ (resp. any $m \in [r]$ such that $I(j,m) \not \in E \cup \{Q\}$) and extend $\sigma_1$ to $\tau$ by setting $V(i,j,\cdot)$ to $\ell$ (resp. $I(j, \cdot)$ to $m$). This choice makes $\tau$ with $E \cup \{Q\}$ satisfy \ref{item:induction-step-admissible-rho-falsifies} and \ref{item:induction-step-admissible-rho-sets-important}. From the construction and the fact that $\sigma_1$ is an admissible assignment it follows that $\tau$ is too. In particular, to see that \ref{item:admissible-D-nontaut-and-fat} - \ref{item:admissible-RL-transf} are satisfied by $\tau$, note that since $\sigma_1$ satisfies \ref{item:admissible-RL-implies-D}, \ref{item:admissible-D-implies-VI} and $Q \not \in \dom(\sigma_1)$, no variable from $D(i,j, \cdot ,\cdot)$ is in $\dom(\sigma_1)$ (and hence is not in $\dom(\tau)$ either), and there is no edge in $G_{\sigma_1}$ incident to $(i,j)$ (and hence there is no such edge in $G_{\tau}$ either).

Case 2. Suppose that $Q \in D(i,j,\cdot, \cdot)$. Since $(i,j)$ is not $D$-important in $E$, no literal of a variable from $D(i,j,\cdot, \cdot)$ is in $E$. 
Because $\sigma_1$ satisfies \ref{item:admissible-RL-implies-D} and $Q \not \in \dom(\sigma_1)$, there is no edge of $G_{\sigma_1}$ incident to $(i,j)$. But $V(i,j,\cdot)$ (resp. $I(j,\cdot)$ if $i=1$) may be set by $\sigma_1$. 
If $(i,j) \in \{2,\ldots, s\} \times [t]$, set $D(i,j,\cdot, \cdot)$ to an arbitrary non-tautological clause with $n$ literals, unless $(i,j) = (s,t)$, in which case set $D(i,j,\cdot, \cdot)$ to the empty clause. Then, set $V(i,j,\cdot)$, unless it is already set by $\sigma_1$, to any value $\ell \in [n]$ such that $V(i,j,\ell) \not \in E$. Such $\ell$ exists, because if $V(i,j,\cdot)$ is not set by $\sigma_1$ then $(i,j)$ is not $V$-important in $E$, and hence there are more than $n/2$ available values to choose $\ell$ from. 
If $i=1$, either $I(j, \cdot)$ is set by $\sigma_1$ to some $m \in [r]$ and we set $D(1,j,\cdot, \cdot)$ to any non-tautological clause with $n$ literals that contains the literals of $C_m$, or $I(j, \cdot)$ is not set by $\sigma_1$, in which case we first set it to any $m \in [r]$ such that $I(j,m) \not \in E$ and then we set $D(1,j,\cdot, \cdot)$ as before; such $m$ exists because if $I(j,\cdot)$ is not set by $\sigma_1$ then $(1,j)$ is not $I$-important in $E$, and hence there are more than $r/2$ available values to choose $m$ from. Like in the previous case, it is easy to check that in each of the subcases considered we extended $\sigma_1$ to an admissible assignment $\tau$ such that $\tau$ with $E \cup \{Q^{1-\tau(Q)} \}$ satisfy 
\ref{item:induction-step-admissible-rho-falsifies} and \ref{item:induction-step-admissible-rho-sets-important}.

Case 3. Suppose that $Q \in L(i,j,\cdot)$ (if $Q \in R(i,j,\cdot)$ we proceed in a completely analogous way). 
We may assume that $V(i,j,\cdot)$ is set to some clause $C_{i,j}$ and $D(i,j,\cdot, \cdot)$ is set to some $\ell \in [n]$ by $\sigma_1$; if not, perform the steps in Case 2 to set them both.
We have to set $L(i,j,\cdot)$ to some $j'$, i.e., we have to add to $G_{\sigma_1}$ an edge from $(i,j)$ to a left child $(i-1,j')$, and the first set $U_1$ of pairs $(i-1,j')$ we would like to avoid are the vertices of $G_{\sigma_1}$. To upper bound the number of vertices of its subgraph $G_{\rho}$ that are on level $i-1$, we use items \ref{item:level-bounds-on-AD} - \ref{item:level-bounds-on-AI} of Lemma \ref{lem:level-bounds-for-rho}. According to these items, $G_{\rho}$ has on level $i-1$: at most $|A_{i-1}| \leq 2pt$ endpoints of edges between levels $i-1$ and $i-2$, further, at most $|B_{i-1}| = 2|A_i| \leq 4pt$ endpoints of edges between levels $i$ and $i-1$, and at most $|A_D|+ |A_V| \leq 4pt$ (or $|A_D|+|A_I|\leq 4pt$ if $i-1=1$) isolated vertices. 
To upper bound the number of vertices in $G_{\sigma_1}$ on level $i-1$ that are not in $G_{\rho}$, note that by the minimality of $\sigma_1$, each such vertex either is or shares an edge with a $Z'$-important pair in $E$ for some $Z' \in \{D,V,I,R,L\}$. By items \ref{item:width-of-D-in-clause} - \ref{item:width-of-R-in-clause} of Lemma \ref{lem:widths-reduction}, there are at most $4w$ pairs that can in this way give rise to a vertex in $G_{\sigma_1}$ on level $i-1$ that is not in $G_{\rho}$. Hence $|U_1| \leq 10pt + 4w$.

The second set $U_2$ of pairs $(i-1,j')$ we would like to avoid when looking for a suitable left child of $(i,j)$ are those with $L(i,j,j') \in E$ (because we want $\tau$ to satisfy \ref{item:induction-step-admissible-rho-falsifies}). Because $(i,j)$ is not $L$-important in $E$, we have $|U_2| < t/2$.

The third set $U_3$ of pairs $(i-1,j')$ we would like the left child of $(i,j)$ to avoid depends on whether $i=2$ or $i \in \{3, \ldots ,s\}$. If $i=2$, since $\sigma_1$ satisfies \ref{item:admissible-D-nontaut-and-fat}, we know that $C_{2,j}$ has at least $n-1$ literals (because $s \geq n+1$) and that the clause at the left child of $(2,j)$ is completely determined by $C_{2,j}$ (because we want $\tau$ to satisfy \ref{item:admissible-D-nontaut-and-fat}): it is the clause $(C_{2,j} \setminus \{ \neg x_\ell\}) \cup \{x_\ell\}$. 
Pick some $m \in [r]$ such that the clause $C_m$ of $F$ is a subset of $(C_{2,j} \setminus \{ \neg x_\ell\}) \cup \{x_\ell\}$.
For some $j' \in [t]$, we want to set $D(i-1,j', \cdot, \cdot)$ to $(C_{2,j} \setminus \{ \neg x_\ell\}) \cup \{x_\ell\}$ and $I(j',\cdot)$ to $m$ by $\tau$ (in order to satisfy \ref{item:admissible-I}), but this is not possible if $I(j',m) \in E$. For this reason, in the case $i=2$ we define $U_3 := \{(1,j') : I(j',m) \in E \}$. By item \ref{item:width-of-m-image-of-I-in-clause} of Lemma \ref{lem:widths-reduction}, $|U_3| \leq t/4$. 

If $i \in \{3, \ldots ,s\}$, we are concerned with the case that $C_{i,j}$ has less than $n-1$ literals; otherwise we leave $U_3$ empty. Since $\sigma_1$ satisfies \ref{item:admissible-D-nontaut-and-fat}, no literal of $x_\ell$ is in $C_{i,j}$, and $C_{i,j}$ has at least $s-i$ literals. Pick some $\ell' \in [n]$ such that no literal of $x_{\ell'}$ is in $C_{i,j} \cup \{x_\ell \}$. For some $j' \in [t]$, we want to set $D(i-1,j', \cdot, \cdot)$ to $C_{i,j} \cup \{x_\ell \}$ and $V(i-1,j')$ to $\ell'$ (to make $\tau$ satisfy \ref{item:admissible-D-nontaut-and-fat}). But this is not possible to do if $V(i-1,j',\ell') \in E$. Therefore, in the case $i \in \{3, \ldots ,s\}$ we define $U_3 := \{(i-1,j') : V(i-1,j',\ell') \in E \}$. Thanks to item \ref{item:width-of-ell-image-of-V-in-clause} of Lemma \ref{lem:widths-reduction}, we have $|U_3| \leq t/4$ again.

Now set $L(i,j,\cdot)$ to $j'$ such that $(i-1,j') \not \in U_1 \cup U_2 \cup U_3$. Such $j'$ exists because $|U_1 \cup U_2 \cup U_3| \leq 10pt+4w + t/2 + t/4 < t$ by \eqref{eqn:requirement-on-t}. Also, set $D(i-1,j',\cdot, \cdot)$ and either $I(j',\cdot)$ (if $i = 2$) or $V(i-1,j',\cdot)$ (if $i \in \{3,\ldots, s\}$) as indicated at the definition of $U_3$. In the case where we left $U_3$ empty, set $D(i-1,j', \cdot, \cdot)$ to $(C_{i,j} \setminus \{ \neg x_\ell \}) \cup \{ x_\ell \}$ and set $V(i-1,j',\cdot)$ to any $\ell' \in [n]$ such that $V(i-1,j',\ell') \not \in E$; such $\ell'$ exists because $(i-1,j')$ is not $V$-important in $E$ (by avoiding $U_1$). This finishes the definition of $\tau$. 
Item \ref{item:induction-step-admissible-rho-sets-important} is satisfied by $\tau$ and $E \cup \{ Q^{1-\tau(Q)} \}$ because $D(i,j, \cdot, \cdot)$ is set by $\tau$. 
Item \ref{item:induction-step-admissible-rho-falsifies}
follows for the variables in $D(i,j,\cdot, \cdot)$, $V(i,j,\cdot)$ because we set them using Case 2; for the variables in $L(i,j,\cdot)$ because $(i-1,j') \not \in U_2$; and for the variables in $D(i-1,j',\cdot, \cdot)$, $V(i-1,j',\cdot)$, $I(j',\cdot)$ by avoiding $U_3$ and because $(i-1,j')$ is neither $D$- nor $V$- nor $I$-important in $E$ (due to avoiding $U_1$). Finally, $\tau$ is an admissible assignment: the reasons why \ref{item:admissible-D-nontaut-and-fat} and \ref{item:admissible-I} are satisfied at $(i-1,j')$ are given at the definition of $U_3$, and the remaining conditions are easy to check due to avoiding $U_1$.
\end{proof}

\begin{remark}
\label{remark:bin-encoding}
If we assume $s=n+1$ in Theorem \ref{thm:main-lower-bound-REF-F-s-t} (instead of assuming only $s \geq n+1$) then we can allow $t$ to be smaller: it is enough to assume that $t \geq r^{2+\epsilon}$. This can be useful if one wants to reduce the number of variables of $\textnormal{REF}^{F}_{s,t}$ while keeping the lower bound of the theorem valid. The latter can be showed by making only the following modification in the proof of Theorem \ref{thm:main-lower-bound-REF-F-s-t}: change the definition of $p$ to $p = s^{-1/3}t^{-a'}$ with $a' = \min \left\{ \frac{1+ \epsilon}{3 + \epsilon}, \frac{1}{2} \right\}$, and change the definition of $w$ to $w = s^{1/3}t^{3/5}$. 

We note that if in the definition of $\textnormal{REF}^{F}_{s,t}$ we encode the functions determined by $V$- and $I$-variables in binary instead of in unary, the assumption $t \geq r^{3+\epsilon}$ in Theorem \ref{thm:main-lower-bound-REF-F-s-t} is not necessary (and the proof of the theorem simplifies somewhat), and, in addition, the $L$- and $R$-variables can be encoded in binary too (with some further simplifications of the proof). This reduces the number of variables of $\textnormal{REF}^{F}_{s,t}$ in two ways, by allowing a smaller $t$ and by using a more efficient encoding.
\end{remark}

\begin{remark}
\label{remark:obstacles}
Most of the obstacles our proof has to overcome are caused by the nature of the object described by $\textnormal{REF}^{F}_{s,t}$ and by the fact that the functions determined by $V,I,L,R$-variables are encoded in unary, rather than in binary. This forces us to work with several notions of width of two kinds, and we cannot keep as an invariant of the maintained partial assignment that it falsifies all literals of a clause as we traverse the refutation (as is the case e.g. in \cite{Thapen2016}). Moreover, keeping falsified just the literals with important indices and adding some simple conditions about not directly falsifying an axiom (a method which works e.g. in \cite{Pudlak2000} for the pigeonhole principle) is not enough either, because we need to be prepared to consistently answer the prover's questions about clauses situated at remote parts of the same not too small component (learnt through the $L$- and $R$-variables). This is further complicated by the need to respond by adding a fresh literal to a clause that has too few literals to make sure its width grows fast enough (such clauses originate in the component of the empty clause), and by the necessity to arrive to a weakening of a clause in $F$ when asked how a clause on level 2 is derived; both are more difficult to meet under the unary encoding and pose specific requirements on random restrictions. Our strategy stores some useful information in the form of negating some other literals than just those with important indices in a clause, as can be seen in the hierarchy of setting of variables of different kinds in Definition \ref{def:admissible}.
\end{remark}

\section{Reflection Principle for Resolution}
\label{sec:refl-princ-for-res}
We express the negation of the reflection principle for Resolution by a CNF in the form of a conjunction $\textnormal{SAT}^{n,r} \land \textnormal{REF}^{n,r}_{s,t}$. The only shared variables by the formulas $\textnormal{SAT}^{n,r}$ and $\textnormal{REF}^{n,r}_{s,t}$ encode a CNF with $r$ clauses in $n$ variables. The meaning of $\textnormal{SAT}^{n,r}$ is that the encoded CNF is satisfiable, while the meaning of $\textnormal{REF}^{n,r}_{s,t}$ is that it has a resolution refutation of $s$ levels of $t$ clauses. A formal definition is given next.

Formula $\textnormal{SAT}^{n,r}$ has the following variables. $C$-\emph{variables} $C(m,\ell,b)$, $m \in [r], \ell \in [n], b \in \{0,1\},$ encode clauses $C_m$ as follows: $C(m,\ell,1)$ (resp. $C(m,\ell,0)$) means that the literal $x_\ell$ (resp. $\neg x_\ell$) is in $C_m$. 
$T$-\emph{variables} $T(\ell)$, $\ell \in [n]$, and $T(m,\ell,b)$, $m \in [r], \ell \in [n], b \in \{0,1\}$, encode that an assignment to variables $x_1,\ldots, x_n$ satisfies the CNF $\{C_1,\ldots, C_r\}$. The meaning of $T(\ell)$ is that the literal $x_\ell$ is satisfied by the assignment. The meaning of $T(m,\ell,1)$ (resp. $T(m,\ell,0)$) is that clause $C_m$ is satisfied through the literal $x_\ell$ (resp. $\neg x_\ell$).

We list the clauses of $\textnormal{SAT}^{n,r}$:
\begin{align}
\label{satclause:at-least-one-lit-sat} & T(m,1,1) \lor T(m,1,0) \lor \ldots \lor T(m,n,1) \lor T(m,n,0)& m \in [r],\\
\label{satclause:sat-by-positive} & \neg T(m,\ell,1) \lor T(\ell) & m \in [r], \ell \in [n],\\
\label{satclause:sat-by-negative} & \neg T(m,\ell,0) \lor \neg T(\ell) & m \in [r], \ell \in [n],\\
\label{satclause:sat-lit-in-clause} & \neg T(m,\ell,b) \lor C(m,\ell,b) & m \in [r], \ell \in [n], b \in \{0,1\}.
\end{align}
The meaning of \eqref{satclause:at-least-one-lit-sat} is that clause $C_m$ is satisfied through at least one literal.
The meaning of \eqref{satclause:sat-by-positive} and \eqref{satclause:sat-by-negative} is that if $C_m$ is satisfied through a literal, then the literal is satisfied.
The meaning of \eqref{satclause:sat-lit-in-clause} is that if $C_m$ is satisfied through a literal, then it contains the literal. 

Variables of $\textnormal{REF}^{n,r}_{s,t}$ are the variables $C(m,\ell,b)$ of $\textnormal{SAT}^{n,r}$ together with all the variables of $\textnormal{REF}^F_{s,t}$ for some (and every) $F$ of $r$ clauses in $n$ variables. That is, $\textnormal{REF}^{n,r}_{s,t}$ has the following variables:
$C(m,\ell,b)$ for $m \in [r], \ell \in [n], b \in \{0,1\}$;
$D(i,j,\ell,b)$ for $i \in [s], j \in [t], \ell \in [n], b \in \{0,1\}$; 
$R(i,j,j')$ and $L(i,j,j')$ for $i \! \in \! [s] \! \setminus \! \{1\}, j,j' \in [t]$; $V(i,j,\ell)$ for $i \! \in \! [s] \! \setminus \! \{1\}, j \in [t], \ell \in [n]$; $I(j,m)$ for $j \in [t], m \in [r]$.

The clauses of $\textnormal{REF}^{n,r}_{s,t}$ are 
\eqref{refclause:nontaut} - \eqref{refclause:R-func} of  $\textnormal{REF}^F_{s,t}$ together with the following clauses (to replace clauses \eqref{refclause:axioms}): 
\begin{align}
\label{refclause-n-r:axioms}
& \neg I(j,m) \lor \neg C(m,\ell,b) \lor D(1,j,\ell,b) &  j \in [t], m \in [r], \ell \in [n], b \in \{0,1\},
\end{align}
saying that if clause $C_{1,j}$ is a weakening of clause $C_m$, then the former contains each literal of the latter. So the difference from \eqref{refclause:axioms} is that $C_m$ is no longer a clause of some fixed formula $F$, but it is described by $C$-variables.

\begin{lemma}
\label{lem:subst-to-SAT}
Let $F$ be a CNF with $r$ clauses $C_1,\ldots, C_r$ in $n$ variables $x_1,\ldots, x_n$, and let $\gamma_F$ be an assignment such that its domain are all $C$-variables and $\gamma_F (C(m,\ell,b)) = 1$ if $x_\ell^b \in C_m$ and $\gamma_F (C(m,\ell,b)) = 0$ if $x_\ell^b \notin C_m$. There is a substitution $\tau$ that maps the variables of $\textnormal{SAT} \! \restriction \! \gamma_F$ to $\{0,1\} \cup \{ x_\ell^b : \ell \in [n], b \in \{0,1\} \}$ such that $( \textnormal{SAT} \! \restriction \! \gamma_F ) \! \restriction \! \tau$ is $F$ together with some tautological clauses. 
\end{lemma}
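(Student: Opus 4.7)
My plan is to define the substitution $\tau$ explicitly on the variables of $\textnormal{SAT}^{n,r} \! \restriction \! \gamma_F$, which are exactly the $T$-variables since $\gamma_F$ assigns all $C$-variables, and then to verify the claim clause family by clause family. Reading the intended meaning of the $T$-variables, the natural choice is to set $\tau(T(\ell)) := x_\ell$ for each $\ell \in [n]$, and, for each triple $(m,\ell,b)$ with $m \in [r]$, $\ell \in [n]$, $b \in \{0,1\}$, to set $\tau(T(m,\ell,b)) := x_\ell^b$ if $x_\ell^b \in C_m$ and $\tau(T(m,\ell,b)) := 0$ otherwise. Note that the image of $\tau$ lies in $\{0,1\} \cup \{x_\ell^b : \ell \in [n], b \in \{0,1\}\}$, as required.

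The main work is then a short inspection of the four clause groups. A clause from \eqref{satclause:sat-lit-in-clause} is satisfied and removed by $\gamma_F$ when $x_\ell^b \in C_m$, because then $\gamma_F(C(m,\ell,b)) = 1$; and when $x_\ell^b \notin C_m$ it collapses to the single literal $\neg T(m,\ell,b)$, which $\tau$ rewrites as $\neg 0$, again satisfying and removing the clause. A clause from \eqref{satclause:at-least-one-lit-sat} turns into $\bigvee_{\ell,b}\tau(T(m,\ell,b))$; the summands with $x_\ell^b \notin C_m$ vanish, leaving exactly the disjunction of the literals of $C_m$, which is $C_m$. A clause from \eqref{satclause:sat-by-positive} of the form $\neg T(m,\ell,1) \lor T(\ell)$ becomes the tautology $\neg x_\ell \lor x_\ell$ when $x_\ell \in C_m$ and becomes $\neg 0 \lor x_\ell$ (trivially satisfied, hence removed) otherwise; clauses from \eqref{satclause:sat-by-negative} are handled symmetrically, producing the tautology $x_\ell \lor \neg x_\ell$ or disappearing. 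Collecting the outputs, the only non-tautological, non-trivial clauses that remain are the images of the \eqref{satclause:at-least-one-lit-sat}-clauses, one for each $m \in [r]$, and these are exactly $C_1, \ldots, C_r$, i.e.\ $F$.

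No step should pose a real obstacle; the argument is a routine case analysis and the definition of $\tau$ is essentially dictated by the intended reading of the $T$-variables as witnesses that each clause $C_m$ is satisfied through a designated literal under the assignment given by the $T(\ell)$'s.
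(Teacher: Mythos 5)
Your proposal is correct and takes essentially the same approach as the paper: you define $\tau$ on the $T$-variables exactly as the paper does (sending $T(m,\ell,b)$ to $x_\ell^b$ or $0$ according to membership of $x_\ell^b$ in $C_m$, and $T(\ell)$ to $x_\ell$) and then verify the four clause families. The only cosmetic difference is that you set $\tau(T(\ell)) := x_\ell$ unconditionally, whereas the paper only assigns it in the case $\gamma_F(C(m,\ell,b))=1$; your version is in fact slightly cleaner, as it guarantees $\tau$ is total on all $T$-variables even for an $x_\ell$ that happens not to occur in any $C_m$.
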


\begin{proof}
Define $\tau$ as follows. If $\gamma_F(C(m,\ell,b)) = 0$, then $\tau(T(m,\ell,b)) = 0$. This satisfies \eqref{satclause:sat-by-positive} 
 - \eqref{satclause:sat-lit-in-clause} and deletes $T(m,\ell,b)$ from \eqref{satclause:at-least-one-lit-sat}. If $\gamma_F(C(m,\ell,b)) = 1$, then \eqref{satclause:sat-lit-in-clause} has been satisfied and we define $\tau(T(m,\ell,b)) = x_\ell^b$ and $\tau(T(\ell)) = x_\ell$. This choice turns \eqref{satclause:sat-by-positive} - \eqref{satclause:sat-by-negative} into tautological clauses and correctly substitutes the remaining literals of \eqref{satclause:at-least-one-lit-sat} to yield the clause $C_m$ of $F$.
\end{proof}

A polynomial size $\text{Res}(2)$ upper bound for $\textnormal{SAT}^{n,r} \land \textnormal{REF}^{n,r}_{s,t}$, is proved in Section \ref{sec:refl-princ-upper-bound}. We now prove the lower bound, stated in the introduction as Theorem \ref{thm:refl-princ-Lower-bound-Introduction} and restated below as Theorem \ref{thm:refl-princ-lower-bound}.

\begin{theorem}
\label{thm:refl-princ-lower-bound}
For every $c>4$ there is $\delta >0$ and an integer $n_0$ such that if $n,r,s,t$ are integers satisfying 
\begin{equation}
\label{eq:refl-princ-lower-bound-parameters}
 t\geq s \geq n+1,\qquad r \geq n \geq n_0,\qquad  n^c \geq t \geq r^4,
\end{equation}
then any resolution refutation of $\textnormal{SAT}^{n,r} \land \textnormal{REF}^{n,r}_{s,t}$ has length greater than $2^{n^{\delta}}$.
\end{theorem}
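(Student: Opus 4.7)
The plan is to reduce to Theorem \ref{thm:main-lower-bound-REF-F-s-t} by substituting (the encoding of) a concrete hard unsatisfiable CNF for the $C$-variables, and then splitting the resulting refutation using that the plugged-in CNF and the refutation statement sit on disjoint variable sets. First I will fix a family $\{F_n\}$ of unsatisfiable CNFs on $n$ variables with at most $r$ clauses, padded by duplicate clauses up to exactly $r$ (say Tseitin formulas on a constant-degree expander), whose resolution refutations require length at least $2^{n^{\epsilon_1}}$ for some absolute $\epsilon_1 > 0$; such families exist with $r$ linear in $n$, so the condition $r \geq n$ accommodates them.

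Given a hypothetical refutation $\Pi$ of $\textnormal{SAT}^{n,r} \land \textnormal{REF}^{n,r}_{s,t}$ of length at most $2^{n^{\delta}}$, I will first apply the restriction $\gamma_{F_n}$ from Lemma \ref{lem:subst-to-SAT} to the $C$-variables. Each axiom of \eqref{refclause-n-r:axioms} with $x_\ell^b \in C_m$ collapses to the corresponding axiom of \eqref{refclause:axioms}, and those with $x_\ell^b \notin C_m$ become satisfied, so $\textnormal{REF}^{n,r}_{s,t} \!\restriction\! \gamma_{F_n} = \textnormal{REF}^{F_n}_{s,t}$. I then apply the literal substitution $\tau$ from Lemma \ref{lem:subst-to-SAT}, which only touches $T$-variables and so leaves $\textnormal{REF}^{F_n}_{s,t}$ unchanged. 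Substitution of a variable by a literal preserves a resolution refutation: a resolution step on the replaced variable becomes a resolution step on the replacing literal in the substituted premises, while the substituted resolvent is a (possibly proper) weakening of the new resolvent, which is permitted by the paper's convention that weakenings are built into the resolution rule. After dropping the tautological clauses produced by Lemma \ref{lem:subst-to-SAT}, I am left with a resolution refutation of $F_n \land \textnormal{REF}^{F_n}_{s,t}$ of length at most $|\Pi|$.

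The conjuncts $F_n$ and $\textnormal{REF}^{F_n}_{s,t}$ have disjoint variable sets ($x_1,\dots,x_n$ versus the $D,V,L,R,I$-variables), so the standard disjoint-variables splitting for resolution applies: any weakening that introduces a variable from the opposite side can be removed without increasing length, so every clause in the refutation can be assumed pure on one side; the two unit premises of the final resolution step are then pure on the same side, and the sub-refutation deriving them yields a resolution refutation of either $F_n$ or $\textnormal{REF}^{F_n}_{s,t}$ of length at most $|\Pi|$.

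To close, I combine the two lower bounds. Since $t \geq r^4$, Theorem \ref{thm:main-lower-bound-REF-F-s-t} applies with $\epsilon = 1$, yielding some $\delta'>0$; the hypothesis $t \geq t_0$ is taken care of by $t \geq n^4 \geq n_0^4$ for $n_0$ large enough. Hence $\textnormal{REF}^{F_n}_{s,t}$ requires refutations of length at least $2^{t^{\delta'}} \geq 2^{n^{4\delta'}}$ (using $t \geq r^4 \geq n^4$), while $F_n$ requires at least $2^{n^{\epsilon_1}}$ by construction. Picking $\delta < \min(\epsilon_1, 4\delta')$ then contradicts $|\Pi| \leq 2^{n^{\delta}}$. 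The step I expect to be the most delicate is the disjoint-variables splitting under the paper's weakening-in-resolution convention: once one verifies that cross-side weakenings can always be stripped off without increasing length, the all-pure-clauses regime applies and the remainder of the argument is bookkeeping.
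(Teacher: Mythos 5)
Your proof is correct and follows essentially the same reduction as the paper's: restrict the $C$-variables by $\gamma_F$ for a hard unsatisfiable $F$, use disjointness of the remaining variable sets to split the resulting refutation into either a refutation of $\textnormal{REF}^F_{s,t}$ (handled by Theorem~\ref{thm:main-lower-bound-REF-F-s-t}) or a refutation of $F$ (handled by a classical resolution lower bound), and set $\delta$ to the minimum of the two exponents. The only cosmetic differences are that you apply the literal substitution $\tau$ to the whole restricted refutation before splitting whereas the paper splits first and applies $\tau$ only in the $\textnormal{SAT}$-case, that you take $F$ to be a Tseitin formula rather than the pigeonhole principle, and that you spell out the disjoint-variable splitting (pruning cross-side weakening literals) more explicitly than the paper, which simply asserts that $\Pi\restriction\gamma_F$ is already a refutation of one conjunct.
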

\begin{proof}
Fix $c>4$. We first observe that if $\Pi$ is a resolution refutation of $\textnormal{SAT}^{n,r} \land \textnormal{REF}^{n,r}_{s,t}$ and $\gamma$ is a partial assignment such that its domain are all $C$-variables, then $\Pi \! \restriction \! \gamma$ is either a refutation of $\textnormal{REF}^{n,r}_{s,t} \! \restriction \! \gamma$, or a refutation of $\textnormal{SAT}^{n,r} \! \restriction \! \gamma$. This is because $\Pi \! \restriction \! \gamma$ is a resolution refutation and the two restricted formulas do not share any variables. 

Let $F$ be a CNF with $r$ clauses in $n$ variables, and let $\gamma_F$ be a partial assignment defined in Lemma \ref{lem:subst-to-SAT}, which evaluates the $C$-variables so that they describe the clauses of $F$. Notice that $\textnormal{REF}^{n,r}_{s,t} \! \restriction \! \gamma_F$ is $\textnormal{REF}^{F}_{s,t}$, since $\gamma_F$ turns the clauses \eqref{refclause-n-r:axioms} into the clauses \eqref{refclause:axioms} (and removes the satisfied clauses). Therefore, in the case that $\Pi \! \restriction \! \gamma_F$ is a refutation of $\textnormal{REF}^{n,r}_{s,t} \! \restriction \! \gamma_F$ and $F$ is unsatisfiable, the lower bound of Theorem \ref{thm:main-lower-bound-REF-F-s-t} applies (setting $\epsilon = 1$ in that theorem, there is $n_0$ such that conditions \eqref{eqn:conditions-on-nrst} on $n,r,s,t$ follow from \eqref{eq:refl-princ-lower-bound-parameters}): the theorem yields some $\delta_1>0$ such that the length of $\Pi \! \restriction \! \gamma_F$ is at least $2^{n^{\delta_1}}$.

On the other hand, if $\Pi \! \restriction \! \gamma_F$ is a refutation of $\textnormal{SAT}^{n,r} \! \restriction \! \gamma_F$, the substitution $\tau$ from Lemma \ref{lem:subst-to-SAT} takes it into a not larger resolution refutation of $F$ (since tautological clauses can be removed from any resolution refutation).
 
It remains to take any unsatisfiable formula $F$ whose number of clauses is polynomially related to the number of variables and that requires resolution refutations of exponential length, e.g. negation of the pigeonhole principle \cite{Haken1985}. A trivial modification of $F$ to serve also in the extreme case $r=n$ allowed by \eqref{eq:refl-princ-lower-bound-parameters} yields $\delta_2>0$ such that any resolution refutation of $F$ has length greater than $2^{n^{\delta_2}}$, where $n$ is the number of variables of $F$. 

Setting $\delta$ to the minimum of $\delta_1$ and $\delta_2$ concludes the proof of the theorem.
\end{proof}

A similar proof gives Theorem \ref{thm:examples-separating-Introduction}. We restate the theorem below for convenience.
\begin{theorem}
\label{thm:examples-separating}
Let $\delta_1 > 0$ and let $\{A_n\}_{n \geq 1}$ be a family of unsatisfiable CNFs such that $A_n$ is in $n$ variables, has the number of clauses polynomial in $n$, and has no resolution refutations of length at most $2^{n^{\delta_1}}$. Then there is $\delta > 0$ and a polynomial $p$ such that $A_n \land \textnormal{REF}^{A_n}_{n+1,p(n)}$ has no resolution refutations of length at most $2^{n^{\delta}}$ and has polynomial size $\textnormal{Res}(2)$ refutations.
\end{theorem}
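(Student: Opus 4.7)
My plan is to closely mimic the proof of Theorem~\ref{thm:refl-princ-lower-bound}. The setup will in fact be simpler, because $A_n$ (in variables $x_1,\ldots,x_n$) and $\textnormal{REF}^{A_n}_{n+1,p(n)}$ (whose variables are of types $D,V,L,R,I$) already share no variables, so no preliminary restriction to a shared block of $C$-variables will be needed. Let $r_n$ denote the number of clauses of $A_n$, which is polynomial in $n$; after padding $A_n$ with a harmless repetition of an existing clause if necessary I may assume $r_n\geq n$. I will set $p(n):=r_n^4+n+1$, which is polynomial in $n$, is at least $n+1$, and is at least $r_n^4$. For all sufficiently large $n$ the parameters $(n,r_n,n+1,p(n))$ then satisfy the hypotheses of Theorem~\ref{thm:main-lower-bound-REF-F-s-t} with $\epsilon=1$ and $F=A_n$, providing a constant $\delta_2>0$ such that every resolution refutation of $\textnormal{REF}^{A_n}_{n+1,p(n)}$ has length greater than $2^{p(n)^{\delta_2}}\geq 2^{n^{\delta_2}}$.

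Next, for the lower bound, I will take any resolution refutation $\Pi$ of $A_n\land\textnormal{REF}^{A_n}_{n+1,p(n)}$ and invoke the observation used at the beginning of the proof of Theorem~\ref{thm:refl-princ-lower-bound}: because the two conjuncts have disjoint variable sets, $\Pi$ is, after pruning unused axioms without increasing length, a resolution refutation of exactly one of them. If it refutes $A_n$, the hypothesis on $\{A_n\}$ gives $|\Pi|>2^{n^{\delta_1}}$; otherwise the previous paragraph gives $|\Pi|>2^{n^{\delta_2}}$. Taking $\delta:=\tfrac{1}{2}\min(\delta_1,\delta_2)$ and absorbing finitely many small $n$ into the constant then yields the desired lower bound.

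For the Res(2) upper bound, I will apply Theorem~\ref{thm:refl-princ-Upper-bound-Introduction} with parameters $(n,r_n,n+1,p(n))$ to obtain a Res(2) refutation $\Pi'$ of $\textnormal{SAT}^{n,r_n}\land\textnormal{REF}^{n,r_n}_{n+1,p(n)}$ of size polynomial in $n$. I then restrict $\Pi'$ by the assignment $\gamma_{A_n}$ from Lemma~\ref{lem:subst-to-SAT}, which turns $\textnormal{REF}^{n,r_n}_{n+1,p(n)}$ into precisely $\textnormal{REF}^{A_n}_{n+1,p(n)}$ (compare clauses \eqref{refclause:axioms} and \eqref{refclause-n-r:axioms}), and apply the substitution $\tau$ from the same lemma to convert $\textnormal{SAT}^{n,r_n}\!\restriction\!\gamma_{A_n}$ into $A_n$ modulo some tautologies that can be discarded. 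Both operations preserve Res(2) refutations without increasing size, so the result will be a polynomial-size Res(2) refutation of $A_n\land\textnormal{REF}^{A_n}_{n+1,p(n)}$.

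The main step requiring care will be the calibration of $p$: it must be at least $r_n^{3+\epsilon}$ for some fixed $\epsilon>0$ in order to invoke Theorem~\ref{thm:main-lower-bound-REF-F-s-t}, yet must remain polynomial in $n$ so that the Res(2) upper bound from Theorem~\ref{thm:refl-princ-Upper-bound-Introduction} stays polynomial. Because $r_n$ is itself polynomial, the choice $p(n)=r_n^4+n+1$ meets both constraints at once, and everything else amounts to routine bookkeeping built on the lower bound of Theorem~\ref{thm:main-lower-bound-REF-F-s-t} and the upper bound construction of Theorem~\ref{thm:refl-princ-Upper-bound-Introduction}.
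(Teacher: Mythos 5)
Your proposal is correct and matches the paper's approach: choose $p(n)$ to be a polynomial exceeding $\max\{r^4, t_0\}$ (with $r \geq n$ arranged by padding), invoke Theorem~\ref{thm:main-lower-bound-REF-F-s-t} for the lower bound on the $\textnormal{REF}$ part and the hypothesis on $A_n$ for the other part (using that a refutation of a conjunction of variable-disjoint CNFs is essentially a refutation of one of them), and for the $\textnormal{Res}(2)$ upper bound hit the refutation from Theorem~\ref{thm:refl-princ-Upper-bound-Introduction} with the substitutions $\gamma_{A_n}$ and $\tau$ of Lemma~\ref{lem:subst-to-SAT}. You spell out the disjointness argument and the small-$n$ bookkeeping a little more explicitly than the paper, but the route is the same.
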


\begin{proof}
Let $p(n) \geq \max\{r^4,t_0\}$, where $r$ is the maximum of the number of clauses of $A_n$ and $n$, and $t_0$ is given by Theorem \ref{thm:main-lower-bound-REF-F-s-t} for $\epsilon = 1$. That theorem and the assumptions on $A_n$ give the required lower bound. To get the upper bound, start with the $\textnormal{Res}(2)$ refutation of $\textnormal{SAT}^{n,r} \land \textnormal{REF}^{n,r}_{n+1,p(n)}$ given by Theorem \ref{thm:refl-princ-Upper-bound-Introduction}. Take the substitutions $\gamma_{A_n}$ and $\tau$ from Lemma \ref{lem:subst-to-SAT} and observe that $((\textnormal{SAT}^{n,r} \land \textnormal{REF}^{n,r}_{n+1,p(n)}) \! \restriction \! \gamma_{A_n}) \! \restriction \! \tau$ is $A_n \land \textnormal{REF}^{A_n}_{n+1,p(n)}$ together with some tautological clauses.
\end{proof}

\section{The Upper Bounds}
\label{sec:refl-princ-upper-bound}

We restate and prove Theorem \ref{thm:refl-princ-Upper-bound-Introduction} from the Introduction.

\begin{theorem}
\label{thm:refl-princ-Upper-bound}
The negation of the reflection principle for Resolution expressed by the formula $\textnormal{SAT}^{n,r} \land \textnormal{REF}^{n,r}_{s,t}$ has $\text{Res}(2)$ refutations of size $O(trn^2+ tr^2 + st^2n^3 + st^3n)$.
\end{theorem}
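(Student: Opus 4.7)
The plan is to construct the Res(2) refutation around a single auxiliary $2$-DNF, defined for each $(i,j) \in [s] \times [t]$:
\[
\textnormal{SAT}(i,j) \;:=\; \bigvee_{\ell = 1}^{n} \bigl( (D(i,j,\ell,1) \land T(\ell)) \;\lor\; (D(i,j,\ell,0) \land \neg T(\ell)) \bigr),
\]
which intuitively asserts that the clause $C_{i,j}$ of the alleged resolution refutation contains a literal satisfied by the truth assignment encoded by the $T(\ell)$'s. The strategy is to derive $\textnormal{SAT}(1,j)$ for every $j \in [t]$, then pass inductively up the levels to obtain $\textnormal{SAT}(s,t)$, and finally cut against the empty-clause axioms \eqref{refclause:empty-clause} to reach the empty clause.

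For the base case, fix $j \in [t]$ and $m \in [r]$, and work under the hypothesis $I(j,m)$. For each $(\ell,b)$, combine the $T(m,\ell,b)$ hypothesis with \eqref{satclause:sat-lit-in-clause} to obtain $C(m,\ell,b)$, and with \eqref{satclause:sat-by-positive} or \eqref{satclause:sat-by-negative} to obtain $T(\ell)^b$ (writing $T(\ell)^1 := T(\ell)$, $T(\ell)^0 := \neg T(\ell)$). Cutting with the axiom \eqref{refclause-n-r:axioms} then yields the conjunction $D(1,j,\ell,b) \land T(\ell)^b$, i.e.\ a term of $\textnormal{SAT}(1,j)$. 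Summing over the $2n$ pairs $(\ell,b)$ using \eqref{satclause:at-least-one-lit-sat} removes the $T(m,\ell,b)$ hypotheses and produces $\neg I(j,m) \lor \textnormal{SAT}(1,j)$; summing over $m$ using \eqref{refclause:I-dom} yields $\textnormal{SAT}(1,j)$.

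For the inductive step at $(i,j) \in \{2,\ldots,s\}\times[t]$, fix $j' \in [t]$ and $\ell \in [n]$ and assume $L(i,j,j')$ and $V(i,j,\ell)$. For each term $D(i-1,j',\ell'',b'') \land T(\ell'')^{b''}$ of $\textnormal{SAT}(i-1,j')$: if $(\ell'',b'')=(\ell,1)$ the term directly yields $T(\ell)$; otherwise the transfer axiom \eqref{refclause:res-L-transf} gives $D(i,j,\ell'',b'')$, so $\land$-introduction produces a term of $\textnormal{SAT}(i,j)$. Cutting $\textnormal{SAT}(i-1,j')$ term-by-term against these $2n$ side derivations removes every disjunct and yields
\[
\neg L(i,j,j') \lor \neg V(i,j,\ell) \lor T(\ell) \lor \textnormal{SAT}(i,j).
\]
Summing over $j' \in [t]$ using \eqref{refclause:L-dom} eliminates the $L$-hypothesis. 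The symmetric argument using the $R$-variables, \eqref{refclause:res-R-transf}, and \eqref{refclause:R-dom} gives $\neg V(i,j,\ell) \lor \neg T(\ell) \lor \textnormal{SAT}(i,j)$. Cutting on $T(\ell)$ and then summing over $\ell$ via \eqref{refclause:V-dom} produces $\textnormal{SAT}(i,j)$. Finally, $\textnormal{SAT}(s,t)$ cut against the $2n$ empty-clause axioms in \eqref{refclause:empty-clause} yields the empty clause.

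The size accounting splits as follows: the base step costs $O(rn^2)$ per $j$ (each of the $r$ blocks performs $O(n)$ combinations of size $O(n)$) plus a contribution $O(r^2)$ from using the functionality/domain axioms for $I$, giving $O(trn^2 + tr^2)$ overall. Each inductive step costs $O(tn^3)$ for the $t$ choices of $j'$ times the $n$ choices of $\ell$ times the $O(n)$-sized lines, plus $O(t^2 n)$ for summing with $L$- and $R$-domain and functionality clauses, yielding $O(st^2 n^3 + st^3 n)$ summed over levels and nodes. The one step where bookkeeping must be done carefully is the elimination of $\textnormal{SAT}(i-1,j')$ by $2n$ successive cuts: each cut must carry along the current running disjunction, so the lines one manipulates have size $O(n)$ and cannot be allowed to bloat—this is the main, though routine, obstacle to matching the stated bound.
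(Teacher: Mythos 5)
Your proof is correct and follows essentially the same route as the paper's: your auxiliary $2$-DNF $\textnormal{SAT}(i,j)$ is exactly the formula $D_{i,j}$ the paper derives level by level, the base case, the inductive passage, and the final cut against \eqref{refclause:empty-clause} are structured the same way, and the size accounting agrees. The only (harmless) divergence is a small simplification in the inductive step: where the paper deletes the term indexed by $(\ell,0)$ (resp.\ $(\ell,1)$) using \eqref{refclause:nontaut} together with \eqref{refclause:res-L-cut} (resp.\ \eqref{refclause:res-R-cut}), you transfer it via the corresponding clause of \eqref{refclause:res-L-transf} (resp.\ \eqref{refclause:res-R-transf}), which is available since $(\ell,0)\neq(\ell,1)$, thereby avoiding those two families of axioms entirely.
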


\begin{proof}
By induction on $i \in [s]$ we derive for each $j \in [t]$ the formula
\begin{equation}
 D_{i,j} := \bigvee_{\ell \in [n], b \in \{0,1\}} \left( D(i,j,\ell,b) \land T(\ell)^b \right).
\end{equation}
Then, cutting $D_{s,t}$ with \eqref{refclause:empty-clause} for each $\ell \in [n]$ and $b \in \{0,1\}$, yields the empty clause.

Base case: For each $j \in [t]$ we shall derive $D_{1,j}$. For each $m \in [r], \ell \in [n], b \in \{0,1\}$, cut \eqref{satclause:sat-lit-in-clause} with \eqref{refclause-n-r:axioms} to obtain $\neg I(j,m) \lor \neg T(m,\ell,b) \lor D(1,j,\ell,b)$. Applying $\land$-introduction to this and $\neg T(m,\ell,b) \lor T(\ell)^b$ (which is either
\eqref{satclause:sat-by-positive} or \eqref{satclause:sat-by-negative}) yields 
\begin{equation}
\label{eq:refl-up-bound-base-1}
\neg I(j,m) \lor \neg T(m,\ell,b) \lor \left(D(1,j,\ell,b) \land T(\ell)^b\right). 
\end{equation}
Cutting \eqref{eq:refl-up-bound-base-1} for each $\ell \in [n]$ and $b \in \{0,1\}$ with \eqref{satclause:at-least-one-lit-sat} gives $\neg I(j,m) \lor D_{1,j}$. Cutting these clauses for $m \in [r]$ with \eqref{refclause:I-dom} yields $D_{1,j}$.

Induction step: Assume we have derived $D_{i-1,j'}$ for all $j' \in [t]$. For each $j \in [t]$ we shall derive $D_{i,j}$. Write $P_1$ in place of $L$ and  $P_0$ in place of $R$. 

For each $\ell \in [n], b \in \{0,1\}, j' \in [t]$, cut $\neg D(i-1,j',\ell,1) \lor \neg D(i-1,j',\ell,0)$ (from \eqref{refclause:nontaut}) with $\neg P_{1-b}(i,j,j') \lor \neg V(i,j,\ell) \lor D(i-1,j',\ell,1-b)$ (which is from \eqref{refclause:res-L-cut} or \eqref{refclause:res-R-cut}) to obtain $\neg P_{1-b}(i,j,j') \lor \neg V(i,j,\ell) \lor \neg D(i-1,j',\ell,b)$. Cut this with $D_{i-1,j'}$ to get
\begin{equation}
\label{eq:refl-up-bound-induc-1}
\neg P_{1-b}(i,j,j') \lor \neg V(i,j,\ell) \lor \left(D_{i-1,j'} \setminus \{ D(i-1,j',\ell,b) \land T(\ell)^b \}\right).
\end{equation}
Cutting \eqref{eq:refl-up-bound-induc-1} with axiom $T(\ell) \lor \neg T(\ell)$ yields
\begin{align}
 \begin{split}
\label{eq:refl-up-bound-induc-2}
\neg P_{1-b}(i,j,j') & \lor \neg V(i,j,\ell)\lor T(\ell)^{1-b}\\
& \lor \left( D_{i-1,j'} \setminus \{ D(i-1,j',\ell,0) \land \neg T(\ell), D(i-1,j',\ell,1) \land T(\ell) \} \right).
 \end{split}
\end{align}
Next, for each $\ell' \in [n] \setminus \{\ell\}$ and $b' \in \{0,1\}$, apply $\land$-introduction to  $T(\ell') \lor \neg T(\ell')$ and $\neg P_{1-b}(i,j,j') \lor \neg V(i,j,\ell) \lor \neg D(i-1,j',\ell',b') \lor D(i,j,\ell',b')$ (from \eqref{refclause:res-L-transf} or \eqref{refclause:res-R-transf}) to get
\begin{align}
 \begin{split}
\label{eq:refl-up-bound-induc-3}
\neg P_{1-b}(i,j,j') \lor \neg V(i,j,\ell) & \lor \left( D(i,j,\ell',b') \land T(\ell')^{b'} \right) \\
 & \lor \neg D(i-1,j',\ell',b') \lor T(\ell')^{1-b'}.
 \end{split}
\end{align}
Cutting \eqref{eq:refl-up-bound-induc-3}, for each $\ell'\in [n] \setminus \{\ell\}$ and $b' \in \{0,1\}$, with \eqref{eq:refl-up-bound-induc-2} results, after a weakening, in
\begin{equation}
\label{eq:refl-up-bound-induc-4}
\neg P_{1-b}(i,j,j') \lor \neg V(i,j,\ell) \lor T(\ell)^{1-b} \lor D_{i,j}.
\end{equation}
Recall that we have obtained \eqref{eq:refl-up-bound-induc-4} for each $\ell \in [n], b \in \{0,1\}, j' \in [t]$. For each $\ell \in [n]$ and $b \in \{0,1\}$, cut the clauses \eqref{eq:refl-up-bound-induc-4}, $j' \in [t]$, with $\bigvee_{j' \in [t]} P_{1-b}(i,j,j')$ (from \eqref{refclause:L-dom} or \eqref{refclause:R-dom}) to derive 
\begin{equation}
\label{eq:refl-up-bound-induc-5}
 \neg V(i,j,\ell) \lor T(\ell)^{1-b} \lor D_{i,j}.
\end{equation}
For each $\ell \in [n]$, cut \ref{eq:refl-up-bound-induc-5} for $b = 0$ and $b = 1$ on variable $T(\ell)$ to get $\neg V(i,j,\ell) \lor D_{i,j}$, and from these clauses derive $D_{i,j}$ by cuts with \eqref{refclause:V-dom}.

As for bounding the size of the refutation, the size of the base case is $O(trn^2+ tr^2)$, the total size of the induction steps is $O(st^2n^3 + st^3n)$, and the size of the finish is $O(n^2)$.
\end{proof}

\paragraph{Acknowledgement.} I thank Albert Atserias, Ilario Bonacina, Tuomas Hakoniemi and Moritz M\"uller for their comments.

\bibliography{mybiblio}

\appendix

\section{Formula REF of Atserias and M\"uller}
\label{sec:REF-AM}
The purpose of this section is to give an answer to the lower bound question from \cite{atserias-muller2019-ref} in its original formulation, in which the refutation statement is formulated a bit differently from our $\textnormal{REF}^{F}_{s,t}$. 

We list the clauses of the formula $\textnormal{REF}(F,\tilde{s})$ of \cite{atserias-muller2019-ref}:
\begin{align}
& V[u,0] \lor V[u,1] \lor \ldots \lor V[u,n] & u \! \in \! [\tilde{s}], \\
& I[u,0] \lor I[u,1] \lor \ldots \lor I[u,r] & u \! \in \! [\tilde{s}], \\
& L[u,0] \lor L[u,1] \lor \ldots \lor L[u,\tilde{s}] & u \! \in \! [\tilde{s}], \\
& R[u,0] \lor R[u,1] \lor \ldots \lor R[u,\tilde{s}] & u \! \in \! [\tilde{s}], \\
& \neg V[u,i] \lor \neg V[u,i'] & u \! \in \! [\tilde{s}], i,i' \! \in \! [n] \cup \{0\}, i \neq i', \\
& \neg I[u,j] \lor \neg I[u,j'] & u \! \in \! [\tilde{s}], j,j' \! \in \! [r] \cup \{0\}, j \neq j', \\
& \neg L[u,v] \lor \neg L[u,v'] & u \! \in \! [\tilde{s}], v,v' \! \in \! [\tilde{s}] \cup \{0\}, v \neq v', \\
& \neg R[u,v] \lor \neg R[u,v'] & u \! \in \! [\tilde{s}], v,v' \! \in \! [\tilde{s}] \cup \{0\}, v \neq v', \\
\label{refclauseAM:switch-negative}& \neg I[u,0] \lor \neg V[u,0] & u \! \in \! [\tilde{s}],\\
\label{refclauseAM:switch-positive}& I[u,0] \lor V[u,0] & u \! \in \! [\tilde{s}],\\
& \neg I[u,0] \lor \neg L[u,0] & u \! \in \! [\tilde{s}],\\
& \neg I[u,0] \lor \neg R[u,0] & u \! \in \! [\tilde{s}],\\
& \neg L[u,v] & u,v \! \in \! [\tilde{s}], u \leq v,\\
& \neg R[u,v] & u,v \! \in \! [\tilde{s}], u \leq v,\\
& \neg L[u,v] \lor \neg V[u,i] \lor D[v,i,1] & u,v \! \in \! [\tilde{s}], i \! \in \! [n], b \! \in \! \{0,1\},\\
& \neg R[u,v] \lor \neg V[u,i] \lor D[v,i,0] & u,v \! \in \! [\tilde{s}], i \! \in \! [n], b \! \in \! \{0,1\},\\
& \neg L[u,v] \lor \neg V[u,i] \lor \neg D[v,i',b] \lor D[u,i',b] & u,v \! \in \! [\tilde{s}], i,i' \! \in \! [n], b \! \in \! \{0,1\}, i \neq i',\\
& \neg R[u,v] \lor \neg V[u,i] \lor \neg D[v,i',b] \lor D[u,i',b] & u,v \! \in \! [\tilde{s}], i,i' \! \in \! [n], b \! \in \! \{0,1\}, i \neq i',\\
& \neg I[u,j] \lor D[u,i,b] & u \! \in \! [\tilde{s}], j \! \in \! [r], x^b_i \! \in \! C_j,\\
& \neg D[u,i,0] \lor \neg D[u,i,1] & u \! \in \! [\tilde{s}], i \! \in \! [n],\\
& \neg D[\tilde{s},i,b] & i \! \in \! [n], b \! \in \! \{0,1\}.
\end{align}

The meanings of the variables and clauses of $\textnormal{REF}(F,\tilde{s})$ are very similar to those of $\textnormal{REF}^{F}_{s,t}$, which we described in words in detail, so let us concentrate on the main differences. First of all, the clauses described by $\textnormal{REF}(F,\tilde{s})$ through $D$-variables are indexed from 1 to $\tilde{s}$; this is their order in the refutation they should form (and they are not arranged in levels). Moreover, unlike in $\textnormal{REF}^{F}_{s,t}$ where each clause described by $D$-variables, with the exception of clauses on level 1, has to be derived only by the resolution rule, in $\textnormal{REF}(F,\tilde{s})$ there are both options (weakening of a clause in $F$ and the resolution rule). That exactly one of these options must be chosen in a valid resolution refutation is ensured with the help of \eqref{refclauseAM:switch-negative}, \eqref{refclauseAM:switch-positive}, and the additional value 0 that the second index of $V,I,L,R$-variables can attain. In particular, any assignment satisfying $\textnormal{REF}(F,\tilde{s})$ evaluates to 1 exactly one of the variables in $\{I[u,j]: j \in [r]\} \cup \{V[u,i]: i \in [n]\}$.

We show how a lower bound on the length of resolution refutations for the formula $\textnormal{REF}(F,\tilde{s})$ follows from the lower bound for $\textnormal{REF}^{F}_{n+1,t}$. 

Let $F$ be a CNF in $n$ variables with $r$ clauses and assume the parameter $\tilde{s}$ in $\textnormal{REF}(F,\tilde{s})$ is such that $t = \lfloor \frac{\tilde{s}}{n+1} \rfloor$ satisfies condition \eqref{eqn:conditions-on-nrst} of Theorem \ref{thm:main-lower-bound-REF-F-s-t} for $\textnormal{REF}^{F}_{n+1,t}$. 

It is straightforward to assign some variables of $\textnormal{REF}(F,\tilde{s})$ so that after removing the satisfied clauses, the formula becomes $\textnormal{REF}^{F}_{n+1,t}$ up to a renaming of variables (and after removing certain clauses from \eqref{refclause:res-L-transf} and \eqref{refclause:res-R-transf} in $\textnormal{REF}^{F}_{n+1,t}$, which immediately follow from \eqref{refclause:nontaut} - \eqref{refclause:res-R-cut} anyway). First, set the appropriate variables in $\textnormal{REF}(F,\tilde{s})$ so that the clauses $D_1, D_2, \ldots , D_{s - t(n+1)}$ described by the formula are all obtained, say, by a weakening of the clause $C_1 \in F$, and that none of these clauses is used as a premise of the resolution rule. Then, arrange the remaining clauses into $n+1$ levels of $t$ clauses. Evaluate to 1 all variables $L[u,0]$, $R[u,0]$ with $u$ on the first level, and evaluate to 0 all the remaining $L[u,v]$, $R[u,v]$ except for those with $u$ on the first highest level than $v$. Further, evaluate to 1 all variables $V[u,0]$ with $u$ on the first level, and evaluate to 0 all the remaining $V[u,i]$ except for those with $u$ from second to last level and a non-zero $i$. Next, evaluate to 1 all $I[u,0]$ with $u$ from second to last level, and evaluate to 0 all the remaining $I[u,j]$ variables except for those on the first level with a non-zero $j$. Finally, replace all the non-evaluated variables by the corresponding variables of $\textnormal{REF}^{F}_{n+1,t}$, respecting the above chosen arrangement to $n+1$ levels of $t$ clauses.

Since the substitution just described takes any refutation of $\textnormal{REF}(F,\tilde{s})$ to a refutation of $\textnormal{REF}^{F}_{n+1,t}$ without any increase in size, Theorem \ref{thm:main-lower-bound-REF-F-s-t}
implies an exponential resolution size lower bound for the original formula $\textnormal{REF}(F,\tilde{s})$ of Atserias and M\"uller, as stated in the following theorem. 

\begin{theorem} \label{thm:main-lower-bound-REF-AM}
For each $\epsilon>0$ there is $\delta>0$ and an integer $t_0$ such that if $n,r,\tilde{s}$ are integers satisfying 
\begin{equation*}
 r \geq n \geq 2,\qquad \left\lfloor \frac{\tilde{s}}{n+1} \right\rfloor \geq r^{3+\epsilon}, \qquad  \left\lfloor \frac{\tilde{s}}{n+1} \right\rfloor \geq t_0,
\end{equation*}
and $F$ is an unsatisfiable CNF consisting of $r$ clauses $C_1, \ldots , C_r$ in $n$ variables $x_1 , \ldots , x_n$, then any resolution refutation of $\textnormal{REF}(F,\tilde{s})$ has length greater than $2^{\lfloor \frac{\tilde{s}}{n+1} \rfloor ^{\delta}}$. 
\end{theorem}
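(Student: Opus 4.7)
The plan is to reduce Theorem \ref{thm:main-lower-bound-REF-AM} to the already-established Theorem \ref{thm:main-lower-bound-REF-F-s-t} by exhibiting a size-preserving substitution from $\textnormal{REF}(F,\tilde{s})$ to $\textnormal{REF}^F_{n+1,t}$ with $t = \lfloor\tilde{s}/(n+1)\rfloor$. The parameter conditions of Theorem \ref{thm:main-lower-bound-REF-F-s-t} for $s = n+1$ and the same $\epsilon$ follow directly from the hypotheses of Theorem \ref{thm:main-lower-bound-REF-AM}: we have $t \geq r^{3+\epsilon} \geq n^{3+\epsilon} \geq n+1$ for $n \geq 2$, so $t \geq s \geq n+1$, while $r \geq n \geq 2$ and $t \geq t_0$ are assumed directly. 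If I can construct a partial assignment $\sigma$ to the variables of $\textnormal{REF}(F,\tilde{s})$ together with a renaming of the remaining variables such that the restricted formula (after deleting satisfied clauses and renaming) equals $\textnormal{REF}^F_{n+1,t}$ modulo tautologies and redundant weakenings, then any resolution refutation of $\textnormal{REF}(F,\tilde{s})$ of length $L$ would yield a refutation of $\textnormal{REF}^F_{n+1,t}$ of length at most $L$, and the required exponential lower bound on $L$ would come immediately from Theorem \ref{thm:main-lower-bound-REF-F-s-t}.

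To define $\sigma$ I would partition $[\tilde{s}]$ into an excess block $I_0 = \{1, \ldots, \tilde{s} - t(n+1)\}$ followed by $n+1$ consecutive blocks $I_1, \ldots, I_{n+1}$, each of size $t$; block $I_k$ will play the role of level $k$ of $\textnormal{REF}^F_{n+1,t}$. On the excess block every index is collapsed to a harmless weakening of a fixed clause $C_1 \in F$ that will never serve as a premise: for each $u \in I_0$, set $I[u,1] = 1$ (and $I[u,j] = 0$ for $j \neq 1$), $V[u,0] = 1$ (and $V[u,i] = 0$ for $i \neq 0$), $L[u,0] = R[u,0] = 1$ (and $L[u,v] = R[u,v] = 0$ for $v \neq 0$), and $D[u,i,b]$ according to whether $x_i^b \in C_1$; additionally force $L[u',u] = R[u',u] = 0$ for every $u' \in [\tilde{s}]$. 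On the level blocks the structural variables are fixed while content variables stay free: for $u \in I_1$ set $I[u,0] = 0$, $V[u,0] = 1$, $L[u,0] = R[u,0] = 1$, and $L[u,v] = R[u,v] = 0$ for $v \neq 0$; for $u \in I_k$ with $k \geq 2$ set $I[u,0] = 1$, $V[u,0] = 0$, and $L[u,v] = R[u,v] = 0$ whenever $v \notin I_{k-1}$. The bijective renaming then identifies $D[u,i,b]$ with $D(k,j,i,b)$ when $u$ is the $j$-th element of $I_k$, $V[u,i]$ with $V(k,j,i)$ for $k \geq 2$, $I[u,m]$ with $I(j,m)$ for $u \in I_1$, and $L[u,v]$, $R[u,v]$ with $L(k,j,j')$, $R(k,j,j')$ when $v$ is the $j'$-th element of $I_{k-1}$.

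The main obstacle I expect is the bookkeeping: for each axiom group of $\textnormal{REF}(F,\tilde{s})$ I must check both that $\sigma$ does not falsify any axiom and that the surviving clauses, once renamed, are axioms of $\textnormal{REF}^F_{n+1,t}$ (possibly together with tautologies or weakenings of other surviving axioms). The uniqueness and functionality clauses on $V, I, L, R$ are preserved mechanically; the switch clauses \eqref{refclauseAM:switch-negative}--\eqref{refclauseAM:switch-positive} are satisfied at every index by the chosen zero-indexed values; the triangular constraints $\neg L[u,v]$, $\neg R[u,v]$ for $u \leq v$ are covered because the only surviving $L, R$-pointers from an $I_k$-index ($k \geq 2$) go strictly down into $I_{k-1}$, and those sitting on $I_0$ or $I_1$ have already been set to $1$ only at the harmless $v = 0$ coordinate; the transformation clauses collapse to \eqref{refclause:res-L-transf}--\eqref{refclause:res-R-transf}, and the axiom group linking $I$ with $D$ at level $1$ collapses to \eqref{refclause:axioms}. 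Once this axiom-by-axiom verification is done, the reduction is complete, and the exponent $\lfloor\tilde{s}/(n+1)\rfloor^{\delta} = t^{\delta}$ in the conclusion of Theorem \ref{thm:main-lower-bound-REF-AM} inherits directly from $t^{\delta}$ in Theorem \ref{thm:main-lower-bound-REF-F-s-t}.
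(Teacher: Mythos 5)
Your proposal is correct and follows essentially the same route as the paper: set $t = \lfloor \tilde{s}/(n+1)\rfloor$, collapse the $\tilde{s}-t(n+1)$ excess indices to weakenings of a fixed clause of $F$ that are never used as premises, arrange the remaining indices into $n+1$ consecutive blocks of size $t$ playing the role of levels, fix the $V/I/L/R$ zero-coordinates and cross-level pointers so the surviving clauses rename to a subset of $\textnormal{REF}^F_{n+1,t}$, and then invoke Theorem~\ref{thm:main-lower-bound-REF-F-s-t} with $s=n+1$. Your explicit layout of the blocks, the check that the triangular constraints $\neg L[u,v]$, $\neg R[u,v]$ ($u\le v$) survive because free pointers only go strictly downward, and the parameter verification $t\ge r^{3+\epsilon}\ge n^{3+\epsilon}\ge n+1$ all match what the paper does, just written out in more detail.
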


\end{document}